\newcommand\mathcircled[1]{%
  \mathpalette\@mathcircled{#1}%
}
\newcommand\@mathcircled[2]{%
  \tikz[baseline=(math.base)] \node[draw,circle,inner sep=1pt] (math) {$\m@th#1#2$};%
}
\newtheorem{theo}{\textbf{Theorem}}[section]
\newtheorem{prop}[theo]{\textbf{Proposition}}
\title{Interactions between different predator-prey states.\\
A method for the derivation of the functional and numerical response.}
\author{
  Cecilia Berardo\thanks{Corresponding author.} \\
  Department of Mathematics and Statistics\\
  FI-00014 University of Helsinki, Finland\\
  \texttt{cecilia.berardo@helsinki.fi} \\
   ORCID: 0000-0002-1729-3765
   \And
 Stefan Geritz \\
 Department of Mathematics and Statistics\\
 FI-00014 University of Helsinki, Finland\\
 \texttt{stefan.geritz@helsinki.fi} \\
 ORCID: 0000-0002-7865-3541    
 \And
 Mats Gyllenberg\\
 Department of Mathematics and Statistics\\
 FI-00014 University of Helsinki, Finland\\
\texttt{mats.gyllenberg@helsinki.fi}  \\
ORCID: 0000-0002-0967-8454
\And
Ga\"{e}l Raoul \\
Centre de Math\'ematiques Appliqu\'ees\\
\'Ecole Polytechnique, 91128 Palaiseau, France\\
\texttt{gael.raoul@gmail.com} \\
ORCID: 0000-0002-9132-5550 
}
\begin{document}
\maketitle

\begin{abstract}
In this paper we introduce a formal method for the derivation of a predator's functional response from a system of fast state transitions of the prey or predator on a time scale during which the total prey and predator densities remain constant. Such derivation permits an explicit interpretation of the structure and parameters of the functional response in terms of individual behaviour. The same method is also used here to derive the corresponding numerical response of the predator as well as of the prey.
\end{abstract}

\keywords{Predator-prey model \and Functional response \and Numerical response \and Mechanistic modelling \and Structured population}

\section{Introduction}\label{intro}

The functional response is defined as the average number of prey captured per individual predator per unit of time as a function of the population densities of the prey, the predator or both. Well known examples are the Holling I, II and III \cite{holling1959some} and the Beddington-DeAngelis \cite{deangelis1975model,beddington1975mutual} functional responses. While the Holling type II functional response was derived using a time budget argument, the Holling type III as well as the functional response by Beddington and DeAngelis were introduced without an explicitly modelled underlying mechanism. However, as we show here, these responses (and many more) can be derived from a system of fast state transitions of the prey or predator during which the total prey and predator densities remain constant. \\

For example, Metz and Diekmann \cite{metz2014dynamics} derived the Holling type II functional response assuming two predator states, searching and handling, where the transition from the searching to the handling state is the result of an actual prey capture. As a consequence, the equilibrium distribution of predator densities over the two states depends on the prey density: the higher the prey density, the greater the proportion of the individual predators in the handling state and, since it is only the searching predators that capture the prey, the average number of prey captured per predator per unit of time varies with the prey density exactly as described by Holling. \\

Likewise, the Beddington-DeAngelis functional response, whose traditional interpretation is in terms of predator interference, was derived in a different context by Geritz and Gyllenberg \cite{geritz2012mechanistic}. They assumed two prey states, exposed and hiding, in addition to the two predator states of searching and handling. The transition from the exposed to the hiding state is mediated by the encounter with the predator. The equilibrium density distributions of both the prey and the predator over their respective states, therefore, depend on one another's population density. As searching predators capture only exposed prey, the functional response now is not just a function of the prey density, as in the Holling type II functional response, but also of the density of the predator itself.\\

Equally important as the functional response are the numerical responses of the prey and the predator. One distinguishes between demographic and aggregative numerical responses. The latter is a consequence of individuals moving in space and will not be considered here. The demographic numerical response is the rate of change in population density due to birth and death as a function of the population densities of the prey, the predator or both. The same individual-level processes that determine the functional response can also determine the numerical response. Examples of how the numerical response of the prey can depend on the density of the predator (other than through prey capture) have been given by Geritz and Gyllenberg \cite{geritz2013group,geritz2014deangelis}.\\

Most predator-prey models in the literature are special cases of the model by Gause \cite{gause1934struggle,gause1936further}
\begin{equation}\label{BASIC}
\left\{\begin{array}{l}
\frac{dX}{dt}=g(X)X-f(X)Y\\
\frac{dY}{dt}=\gamma f(X)Y-\delta Y
\end{array}\right.
\end{equation}
where $X$ and $Y$ are the densities of the prey and the predator, respectively, $g(X)$ is the \textit{per capita} growth rate of the prey population if the predator is absent, $f(X)$ is the predator functional response, $\gamma> 0$ is defined as the conversion factor of prey into predator offspring and $\delta>0$ is the \emph{per capita} natural mortality rate of the predator. \\

The predator's numerical response (through birth) in equation (\ref{BASIC}) is given by the term $\gamma f(X)$. However, the linear relationship between the predator's numerical response and its functional response is lost if different prey-handling states have different fertility levels. In this paper we give an example with two predator states (starving and well-fed) where the former hunts to survive (but does not reproduce) while the latter hunts to reproduce. The proportion of predator individuals in each state depends on the prey density such that at low prey densities there are relatively more starving predators. This leads to a nonlinear relation between the numerical and functional responses, which on the population level can be described by a non-constant conversion factor $\gamma(X,Y)$.  \\

Likewise, if searching and handling predator individuals have different death rates
and their relative densities vary with the prey density, then the mean death rate will vary accordingly. Moreover, if two prey states (like exposed and hiding) have different birth or death rates, and if the relative densities of the states depend on the predator density (as in a previous example), then the numerical response of the prey will depend on the predator density in a way that is not directly linked
to prey capture. All in all this leads to the more general model
\begin{equation}\label{BASIC2}
\left\{\begin{array}{l}
\frac{dX}{dt}=g(X,Y)X-f(X,Y)Y\\
\frac{dY}{dt}=\gamma(X,Y) f(X,Y)Y-\delta(X,Y) Y
\end{array}\right.
\end{equation}
Once the states and state transitions have been specified, the expressions for $f$, $g$, $\gamma $ and $\delta $ as functions of both $X$ and $Y$ follow automatically. \\

Functional responses provide a connection between two levels of description of a biological population: the microscopic level, where the interactions between individual behavioural states are described, and a macroscopic level, where only the population size is tracked. The interplay between behavioural states and functional responses has been the focus of numerous research works, see e.g. \cite{pettorelli2015individual,abrams2015ratio,jeschke2002predator,alexander2012functional}. More recently, several teams are trying to understand the impact of stochasticity (due, for instance, to stochastic interactions between predators and prey, or to the limited number of individuals involved). For more details on stochastic models, we refer to \cite{johansson2003local}, while the connection between stochastic and deterministic models is discussed in \cite{dawes2013derivation}. Moreover, in \cite{billiard2018rejuvenating} the first deviation from the deterministic dynamics implied by stochastic effects has been analysed.\\

In the context of global warming and rapid changes of species range, the quantitative information provided by functional responses is a valuable tool to understand population dynamics. Recent studies on invasive species have adopted this approach, for example  \cite{dick2013ecological,barrios2015predator,taylor2018predatory,crookes2019comparative}, while other teams are using this notion to discuss the efficiency of biocontrol agents, as given in \cite{cabral2009predation,schenk2002functional}. In both cases, the precise description provided by the functional response proves key to understanding the effect of the antagonistic behaviour on biodiversity and species density. Other works investigate further the quantitative capabilities of functional responses: they develop fitting methods and algorithms to estimate the parameters of the models, see \cite{pritchard2017frair,skalski2001functional,rosenbaum2018fitting}. \\
 
In this article, we consider that both the predator and the prey population are structured by behavioural states. We give a formal method for the mechanistic derivation of a predator's functional response which provides a clear interpretation of the population dynamics in terms of the underlying individual level processes. In addition, we apply the same method to derive the corresponding numerical response of the predator and the prey as well. \\

Section~\ref{sec2:1} is focused on the time scale separation argument and the general method used to derive the functional and numerical responses. In Section~\ref{sec2:2}, we discuss the existence and uniqueness issues for the fast dynamics that are necessary to apply the time scale separation idea. We illustrate these notions with a canonical example in Section~\ref{sec2:3}. \\

In Sections~\ref{sec:3} and~\ref{sec:4}, we consider two cases where either the prey population or the predator population is structured by behavioural states, while the other species is only described by its total density. An interesting outcome is that the conversion rate and the death rate of the predator population are no longer constant if we suppose that the behavioural states of the predator population impact its reproduction rate or mortality rate. In Section~\ref{sec:5}, we consider a case where both populations are structured by behavioural states. In spite of the more complex interaction structure, we show that it is still possible to understand the fast dynamics of the model and to derive explicitly the functional and numerical responses. In each section, we compare the functional responses that we obtain with the well known functions of Holling and Beddington and DeAngelis.\\

In Section~\ref{end}, we discuss the advantages and drawbacks of our approach.

\section{The general method}\label{sec:2}

\subsection{The model}\label{sec2:1}

Consider a predator-prey model with $x=\large(x_i\large)_{i=1}^{m}$ and $y=\large(y_i\large)_{i=1}^{n}$ where $x_i$ and $y_i$ denote the densities of the prey population and the predator population in the various states. By assuming that the transitions between the different states are fast, we can ignore slower processes such as birth and decay. \\

The ordinary differential equations which model the fast time scale scenario are given by
\begin{equation}\label{SYST}
\left\{\begin{array}{l}
\frac {dx_k}{dt}(t)=\sum_{i=1}^m A_{ki}x_i(t)+\sum_{i=1}^m \left(\sum_{j=1}^nB_{ij}^{(k)}y_j(t)\right)x_i(t),\quad k=1,...,m\\
\frac {dy_k}{dt}(t)=\sum_{i=1}^n \left(\sum_{j=1}^mC_{ij}^{(k)}x_j(t)\right)y_i(t)+\sum_{i=1}^n D_{ki}y_i(t), \quad k=1,...,n
\end{array}\right.
\end{equation}
with the consistency conditions on the parameter values
\begin{eqnarray}\label{cons1}
-A_{ii}&=&\sum_{k=1,k\neq i}^{m} A_{ki},\quad \forall i\in\large[1,m\large]\\ \label{cons2}
-B_{ij}^{(i)}&=&\sum_{k=1,i\neq k}^m B_{ij}^{(k)},\quad \forall i\in\large[1,n\large], j\in\large[1,m\large]\\ \label{cons3}
-D_{ii}&=&\sum_{k=1,k\neq i}^{n} D_{ki},\quad \forall i\in\large[1,n\large]\\ \label{cons4}
-C_{ij}^{(i)}&=&\sum_{k=1,i\neq k}^{n} C_{ij}^{(k)},\quad \forall i\in\large[1,n\large], j\in\large[1,m\large]
\end{eqnarray}
\indent
The prey move from state $i$ to state $k$ with rate $A_{ki}$, with $k,i\in \large[1,m\large]$. They leave state $i$ at rate $A_{ii}$ and enter one of the $k\neq i$ states with $k\in \large[1,m\large]$ at rate $A_{ki}$ with $k\neq i,k\in \large[1,m\large]$, such that (\ref{cons1}) follows.
When interacting with predators in state $j$, prey individuals in state $i$ leave their state at rate $B_{ij}^{(i)}$ and move into one of the $k\neq i$ states, $k\in \large[1,m\large]$ at rate $B_{ij}^{(k)}$ with $k\neq i,k\in \large[1,m\large]$. Then, the consistency condition in (\ref{cons2}) is required.\\
The predators move from state $i$ to state $k$, with $i,k \in\large[1,n\large]$, by interacting with the prey in state $j\in\large[1,m\large]$ with rate $C_{ij}^{(k)}$ or spontaneously with rate $D_{ki}$. As for the prey transitions, the consistency conditions on the parameter values which characterise the interactions between the predator states are given in (\ref{cons3}) and (\ref{cons4}).\\

Note however that the model is not fully general since it does not include prey-prey or predator-predator interactions, nor the formation of complexes of (possibly multiple) prey or predator individuals, such as the formation of groups (see e.g. \cite{geritz2013group}). For example, stalking of the prey and unsuccessful attacks can be modelled as fast reactions, but they involve mixed predator-prey states. These cases fall outside the general framework presented here, as we include only pure prey and pure predator states. However, while including prey-prey or predator-predator interactions as well as mixed predator-prey states is straightforward for concrete applications (see, for example, \cite{jeschke2002predator}), it becomes more difficult to give a general qualitative analysis of the fast dynamics. Moreover, we consider only two separate time scales. The approach can be readily extended to multiple time scales.\\

In order to derive the corresponding functional response on the slow time scale, it is necessary that the fast dynamics settles on a unique hyperbolically stable steady state $(\bf \hat{x},\bf \hat{y})$, where $\bf\hat{x}$ and $\bf\hat{y}$ denote the population column vectors.\\
We calculate the functional response by considering the total number of prey in state $i$ caught with capture rate $\beta_{ij}$ by an individual predator in state $j$ over the size $Y$ of the predator population
\begin{equation}\label{frdef}
f(X,Y)=\frac{\sum_{i=1}^m \sum_{j=1}^n \beta_{ij} \hat{x}_i \hat{y}_j}{Y}
\end{equation}
with $X$ denoting the size of the prey population.\\

Likewise, the prey numerical response can be expressed as
\begin{equation}\label{nrpreydef}
g(X,Y)=\frac{\sum_{i=1}^m \lambda_i \hat{x}_i}{X}-\frac{\sum_{i=1}^m \mu_i \hat{x}_i}{X}
\end{equation}
where $\lambda_i$ and $\mu_i$ are respectively the \textit{per capita} birth and natural mortality rates corresponding to the prey state $i$. Next, for the predator, let $\gamma_{ij}$ and $\delta_j $ denote respectively the \textit{per capita} fecundity of a predator in state $j$ that has captured a prey in state $i$ and the \textit{per capita} natural mortality rate of a predator in state $j$. Then the predator's numerical response is given by
\begin{equation}\label{nrpreddef}
\gamma(X,Y)f(X,Y)-\delta(X,Y)=\frac{\sum_{i=1}^m \sum_{j=1}^n \gamma_{ij}\beta_{ij}\hat{x}_i\hat{y}_j}{Y}-\frac{\sum_{j=1}^n \delta_j \hat{y}_j}{Y}
\end{equation}
where $\gamma(X,Y)=\frac{\sum_{i=1}^m \sum_{j=1}^n \gamma_{ij}\beta_{ij}\hat{x}_i\hat{y}_j}{\sum_{i=1}^m \sum_{j=1}^n \beta_{ij} \hat{x}_i \hat{y}_j}$ and $\delta(X,Y)=\frac{\sum_{j=1}^n \delta_j \hat{y}_j}{Y}$ are respectively the density-dependent conversion factor and mortality rate. \\

The equilibrium on the short time scale gives the frequency distribution of individuals over the various states. This is the same as the distribution of the amount of time that a single individual spends in the various states. Therefore, the population level responses on the long time scale result from time-averaging on the short time scale.\\

Note that the function $\gamma$ does not have a direct interpretation in terms of the individual behaviour. It is a population level model component that we introduce here in order to give the equation in the form of the predator-prey model in (\ref{BASIC}). The functional form of the product of $\gamma$ and $f$, on the other hand, does have an individual level interpretation, which is given in equation (\ref{nrpreddef}).
\\

\subsection{Existence and uniqueness of the fast dynamics equilibrium}\label{sec2:2}

The system in ($\ref{SYST}$) can be rewritten in matrix form as
\begin{equation}\label{MATRSYST}
\left\{\begin{array}{l}
\bf \dot{x}=( A+B(y))x \\
\bf \dot{y}=(C(x)+D)y
\end{array}\right.
\end{equation}
The matrices $\bf A+B(y)$, $\bf A$ and $\bf B(y)$ in $M_m(\mathbb R)$, where we use the notation $M_m(\mathbb R)$ to denote the $m\times m$-matrix space over $\mathbb R$, are non-negative off-diagonal matrices and have negative main diagonal entries. The same conditions apply to the matrices $\bf C(x)+D$, $\bf D$ and $\bf C(x)$  in $M_n(\mathbb R)$.  \\

In the linear case, when $\bf{B}=0,\bf{C}=0$, because of the consistency conditions, the matrices $\bf A$ and $\bf D$ correspond to the transition rate matrices of a continuous time Markov chain and the system in (\ref{MATRSYST}) becomes
\begin{equation}\label{MATRSYSTLIN}
\left\{\begin{array}{l}
\bf \dot{x}=Ax \\
\bf \dot{y}=Dy
\end{array}\right.
\end{equation}
Under the assumption that this Markov chain is irreducible and aperiodic, there exists a unique stationary distribution ${\bf\pi}$, corresponding to the fast dynamics steady state we are looking for and which can be found by solving the system in (\ref{MATRSYST}). Furthermore, the convergence to the limit distribution is exponentially fast. As shown in the example in Section~\ref{sec2:3}, a similar argument is used in the triangular case, when the transitions of one of the two species are not affected by the other population densities.\\

Alternatively, when we consider the non-linear case ($B,C \neq 0$), the existence of the equilibrium corresponding to the fast dynamics is guaranteed by the Perron-Frobenius Theorem and Shauder's Fixed Point Theorem. We give the detailed proof in the Appendix~\ref{app1}. However, the uniqueness of this equilibrium is more difficult to establish.\\

When we consider a model with a small number of states (typically 4 states in total), the uniqueness and hyperbolic stability of the steady states can often be checked directly. This is the case of the application that we will discuss in Section~\ref{sec:5}, where the hyperbolic stability of the fast dynamics equilibrium is verified, as we show in the Appendix~\ref{app2}.\\

If the number of states is larger (more than 4 states in total), we have not been able to prove or refute the uniqueness and hyperbolic stability of the steady states under the assumptions presented in Section~\ref{sec2:1}. If we relax these assumptions, however, we can show that it is possible to build examples where the uniqueness does not hold. In Appendix~\ref{app4}, we construct matrices $\bf A, B, C, D$ that satisfy the assumptions except for some diagonal coefficients of $\bf A$ and $\bf D$ that are equal to $0$ (specifically $A_{11}=A_{nn}=D_{11}=D_{nn}=0$). The fast dynamics has then at least two different steady states. This example, where the uniqueness is an issue, can be seen as a model for an actual biological system. Therefore the uniqueness problem appears not only as a mathematical challenge, but also as an important question for the general application of the method, which first of all requires a good understanding of the fast dynamics asymptotic behaviour.\\

The example discussed in Section~\ref{sec2:3} is a \textit{triangular case} of (\ref{MATRSYST}), where $\bf B(y)=0$ for all $\bf y$, while the applications in Sections~\ref{sec:3} and~\ref{sec:4} model the scenario with $\bf A=0$ (i.e., only a single prey state) and $\bf B(y)=0$ for all $\bf y$ and Section~\ref{sec:5} gives an application with the complete model form.\\

\section{Application: when the transitions of the prey are not due to the interactions with the predator states.}\label{sec2:3}

Consider the following ODE system
\begin{equation}\label{firstex}
\left\{\begin{array}{l}
\frac{dx_k}{dt}=\sum_{i=1}^m A_{ki}x_i(t),\quad k=1,...,m\\ 
\frac{dy_k}{dt}=\sum_{i=1}^n \left(\sum_{j=1}^mC_{ij}^{(k)}x_j(t)\right)y_i(t)+\sum_{i=1}^n D_{ki}y_i(t), \quad k=1,...,n 
\end{array}\right.
\end{equation}
The equations in (\ref{firstex}) do not take into account the movements between different prey states due to the interactions with the predator states, while the predator movements can be induced by the prey states.\\

The first equation can be rewritten as $\dot{\bf x}= \bf A\cdot  x$, where $\bf x$ and $\dot{\bf x}$ are the column vectors of respectively the different prey states $\left(x_i\right)_{i=1}^{m}$ and their derivatives $\left(\frac{dx_i}{dt}\right)_{i=1}^{m}$ and $\bf A$ is the matrix in $M_m(\mathbb R)$ with elements $A_{ij}$.
The elements on the main diagonal are all strictly negative, while the elements off the main diagonal are non-negative. Furthermore, each column sums to zero. The negative diagonal entry represents the lifetime rate of the corresponding state, while the off-diagonal entries are proportional to the transition probabilities of the embedded Markov chain of the continuous time Markov jump process.\\

The equilibrium $\hat{\bf x}$ for the fast dynamics satisfies the equation $ \bf A\cdot \hat{x}=0$.
We denote by $\bf E_m$ the matrix of ones in $M_m(\mathbb R)$. Let $X$ and $Y$ be the total population sizes. Then, $\hat{\bf x}$ satisfies the equation $\bf E_m \cdot \hat{\bf x}=\bar{X}$, where $\bar{\bf X}$ denotes the vector of length $m$ with all elements equal to $X$. Summing up the two equations, we obtain that $\bf \hat{x}=(A+E_m)^{-1}\cdot \bar{X}$.\\
We repeat the same procedure with the system of equations given by $\bf (C(\hat{x})+D)\cdot \hat{y} =0$, where $\hat{\bf y}$ denotes the column vector of length $n$ of the predator states at equilibrium and $\bf (C(\hat{x})+D)$ the matrix in $M_n(\mathbb R)$ evaluated at $\hat{\bf x}$. Furthermore, $\hat{\bf y}$ satisfies $\bf E_n\cdot \hat{y}=\bar{Y}$, where $\bar{\bf Y}$ denotes the vector of length $n$ with all elements equal to the total predator density $Y$. Then, the equilibrium is given by $\bf \hat{y}=(C(\hat{x})+D+E_n)^{-1}\cdot \bar{Y}$.\\

Now we apply the above to the following matrix for the prey states
\begin{equation}
{\bf A}=
\left[\begin{array}{cccccc}
-A_{11} & A_{12} & 0 & 0 & \dots & 0 \\
 A_{21} & -A_{22} & A_{23} & 0 & \dots &  0\\
 0         &   A_{32} &  -A_{33} & \dots & \dots &  0\\
  \vdots  &      &     &  \ddots  &    &   \vdots    \\
 0           &   \dots   & \dots  &   \dots     &  -A_{m-1,m-1}    & A_{m-1,m}\\
 0         &   \dots  &  \dots  &     \dots    &   A_{m,m-1}                        &  -A_{m,m} 
\end{array}\right]
\end{equation}
The matrix is analogous to the transpose of the generator matrix corresponding to a birth-death process where the parameters $A_{k+1,k}$ and $A_{k,k+1}$ are respectively the birth rates and the death rates. In this case, the prey transitions are not predator induced and occur only between neighbouring states. The prey leave the class $x_k$ with rate $A_{kk}=A_{k-1,k}+A_{k+1,k}$. Moreover, the transitions from the $(k-1)$-state and $(k+1)$-state to the $k$-state happen at rates $A_{k,k-1}$ and $A_{k,k+1}$, as shown below:
\begin{equation}
 \begin{tikzcd}[every arrow/.append style={shift left}]
 \mathcircled{x_{k-1}} \arrow{r}{A_{k,k-1}} &\mathcircled{\;\;x_{k\;\;}} \arrow{l}{A_{k-1,k}}  \arrow{r}{A_{k+1,k}}  & \mathcircled{x_{k+1}} \arrow{l}{A_{k,k+1}}
 \end{tikzcd}
\end{equation}
The ordinary differential equations for the $k=2,...,m-1$ states are
\begin{equation}
\frac{dx_k}{dt}=A_{k,k-1}x_{k-1}-A_{kk}x_k+A_{k,k+1}x_{k+1}
\end{equation}
while the fast dynamics of the $k=1,m$ states is modelled by
\begin{eqnarray}\nonumber
\frac{dx_1}{dt}&=&-A_{11}x_1 +A_{12}x_2\\ 
\frac{dx_m}{dt}&=&A_{m,m-1}x_{m-1}-A_{m,m}x_m
\end{eqnarray}
\\
We use the solution of the balance equation for the stationary distribution of the birth-death process.
The equilibrium of the system $\bf \dot{x}=A \cdot x $ is of the form
\begin{equation}
\hat{x}_k=\hat{x}_1 \prod_{i=1}^{k-1} \frac{A_{i+1,i}}{A_{i,i+1}} .
\end{equation}
Given the normalisation condition $\sum_{k=1}^{m} x_k=X$ for the total prey population, we obtain
\begin{equation}
\hat{x}_1=\frac{X}{1+\sum_{k=2}^{m}\prod_{i=1}^{k-1}
\frac{A_{i+1,i}}{A_{i,i+1}}}.
\end{equation}
\\
We now set $A_{i+1,i}=A_iY$, that is, we assume that the prey transitions from each state to the consecutive one are directly proportional to the total predator density. If, for example, consecutive prey states represent increasing levels of protection, then the prey are likely to move from the current state to the next one at higher predator densities. This applies to the context of prey defenses triggered by predator \emph{kairomones}, e.g. chemo-signals which warn the prey of danger (see, for example, \cite{papes2010vomeronasal,apfelbach2005effects,takahashi2005smell}).
We denote $\prod_{i=1}^{k-1}\frac{A_iY}{A_{i,i+1}}$ by $\mathcal{A}_k Y^{k-1}$. Then:
\begin{equation}\label{n1} \hat{x}_1 = \frac{X}{1+\sum_{k=2}^{m}\mathcal A_k Y^{k-1}} \end{equation}
\begin{equation}\label{n2} \hat{x}_k = \hat{x}_1 \mathcal A_k Y^{k-1}. 
\end{equation}\\

Additionally, we assume that the predator has two states: searching and handling. In particular, we assume that the handling state includes every action of the predator that occurs after prey capture, such as the actual killing of the prey, carrying the prey to the lair, eating, digesting, resting and giving birth. On the other hand, the searching state is considered as an highly active state. Therefore, births happen only while the mother is in the handling state, although rarely enough to be negligible on the short time scale in order not to violate the assumption of constant total prey size. The ODE for the searching predators with density $S$ and with attack rates $(c_i)_{i=1}^{m}$ corresponding to each prey state is given by
\begin{equation}
\frac{dS}{dt}=-\left(\sum_{i=1}^{m} c_i \hat{x}_i\right) S + d H
\end{equation}
where $H$ is the density of the handling predators, $\frac{1}{d}$ is the average handling time and $S+H=Y$.\\
The fast dynamics equilibrium for the searching predators and the handling predators is
\begin{equation}\label{p1} \hat{S}=\frac{1}{1+\frac{1}{d} \sum_{i=1}^{m} c_i\hat{x}_i }Y \end{equation}
\begin{equation}\label{p2} \hat{H}=\frac{\frac{1}{d}\sum_{i=1}^{m} c_i x_i}{1+\frac{1}{d}\sum_{i=1}^{m}c_i\hat{x}_i}Y. \end{equation}\\

Since we include prey capture as a fast process, we need that the predator population size is much smaller than the prey population size, i.e. $Y\ll X$, so that the effect of the prey capture on the total prey population size $X$ is negligible. The prey capture is proportional to the predator population size. If we did not make the assumption of rare predators, then the prey would die out on the short timescale. The total population size $Y$ in this example and in the applications in Section \ref{sec:3}, \ref{sec:4} and \ref{sec:5} is always the \emph{magnified} or scaled-up predator population size. In Appendix \ref{newapp} we give the technical details and assumptions about the individual behaviour in order to achieve time scale separation between the fast dynamics and the slow dynamics.\\

The predator functional response $f(X,Y)$ is calculated as in (\ref{frdef}) and given by
\begin{equation}\label{fr}
f(X,Y)=X \cdot \frac{c_1+\sum_{k=2}^{m}c_k \mathcal A_k Y^{k-1}}{1+\sum_{k=2}^{m} \mathcal A_k Y^{k-1}+\frac{1}{d}X\left(c_1+\sum_{k=2}^{m}c_k \mathcal A_k Y^{k-1}\right)}.
\end{equation}\\ 

If $\frac{1}{d}X\left(c_1+\sum_{k=2}^{m}c_k \mathcal A_k Y^{k-1}\right)\gg 1+\sum_{k=2}^{m} \mathcal A_k Y^{k-1}$, the food source is superabundant and an increase in the prey density $X$ does not increase the feeding rate, which reaches a constant saturation level $d$, as in the Holling type II functional response. Then the function in (\ref{fr}) is increasing with the total prey population until it saturates at this value. \\

Furthermore if there is only one prey state the functional response in (\ref{fr}) simplifies to the Holling type II functional response 
\begin{equation} f_1\left(X,Y\right)=\frac{c_1X}{1+\frac{1}{d}c_1 X}. \end{equation}

We recall that in (\ref{fr}) we assume $A_{21} = A_1 Y$, that is the rate which determines the transitions of the prey from the defended state $x_1$ to the exposed state $x_2$ is linearly depending on the total predator density. This interpretation agrees also with the individual behaviours modelled by Geritz and Gyllenberg in \cite{geritz2012mechanistic} for their mechanistic derivation of the Beddington-DeAngelis functional response, where the available prey are in state $x_1$, while the $x_2$ class denotes those individuals that found a refuge from the predators. In particular, in the literature the most common form of the function by Beddington and DeAngelis is given by
\begin{equation}\label{typicalBD}
f(X,Y)=\frac{aX}{1+bX+cY}.
\end{equation}
Here we obtain a generalisation of the Beddington-DeAngelis functional response in (\ref{typicalBD}), which differs from the one in \cite{geritz2012mechanistic} because we suppose that the prey in both states $x_1$ and $x_2$ can be captured but at different rates:
\begin{equation} \hat{x}_1=\frac{A_{12}X}{A_{12}+A_1Y}, \qquad \hat{x}_2=\frac{A_1YX}{A_{12}+A_1Y} \end{equation}
\begin{equation}\label{GENBDE}
f_2(X,Y)=\frac{c_1A_{12}X+c_2A_1XY}{1+\frac{A_1}{A_{12}}Y+c_1\frac{1}{d}X+\frac{c_2A_1}{A_{12}}\frac{1}{d}XY}.
\end{equation}

The graph of the function in (\ref{GENBDE}) is illustrated in Fig.~\ref{fig:sec2.3}. When the prey population $X$ increases, the asymptotic behaviour is the same as described above for the class of functions in (\ref{fr}). At high predator density $Y$, the functional response in (\ref{GENBDE}) with $c_1<c_2$ increases and tends to the Holling type II functional response (Fig.~\ref{fig:a23}). This is in contrast to the Beddington-DeAngelis functional response, which is a decreasing function of $Y$. On the other hand, if $c_1>c_2$ the functional response decreases with the total predator size $Y$ (Fig.~\ref{fig:b23}).\\

If $c_2=0$ (or sufficiently small), the functional response in (\ref{GENBDE}) is a Beddington-DeAngelis functional response of the form given in (\ref{typicalBD}) with $a=c_1A_{12}$, $b=c_1\frac{1}{d}$ and $c=\frac{A_1}{A_{12}}$.\\

If $c_1=0$ (or sufficiently small), then the slow predator dynamics will have an Allee effect, i.e., the predator population cannot grow when its density is below a given threshold. In particular, it would not be able to invade the predator-free population. This is because of the $XY$ term in the numerator of the functional response, which gives a squared $Y$ term for the population level birth rate of the predator. Then, at low predator densities, the predator birth term is dominated by the negative linear term describing predator death. Allee effects in predator-prey systems lead to homoclinic or heteroclinic bifurcations. At the individual level, we assumed that $A_{21}=A_1Y$, that is the rate at which the prey enter state $x_2$ is proportional to the total predator density. If $c_1=0$ and $c_2 > 0$, then the predators consume only the prey in state $x_2$. However, if the total predator density $Y$ is small, there are not enough prey in state $x_2$ for the predators to survive.\\

\begin{figure}[h!]
\centering
\begin{subfigure}[b]{0.45\textwidth}
  \includegraphics[width=5cm,height=3.5cm]{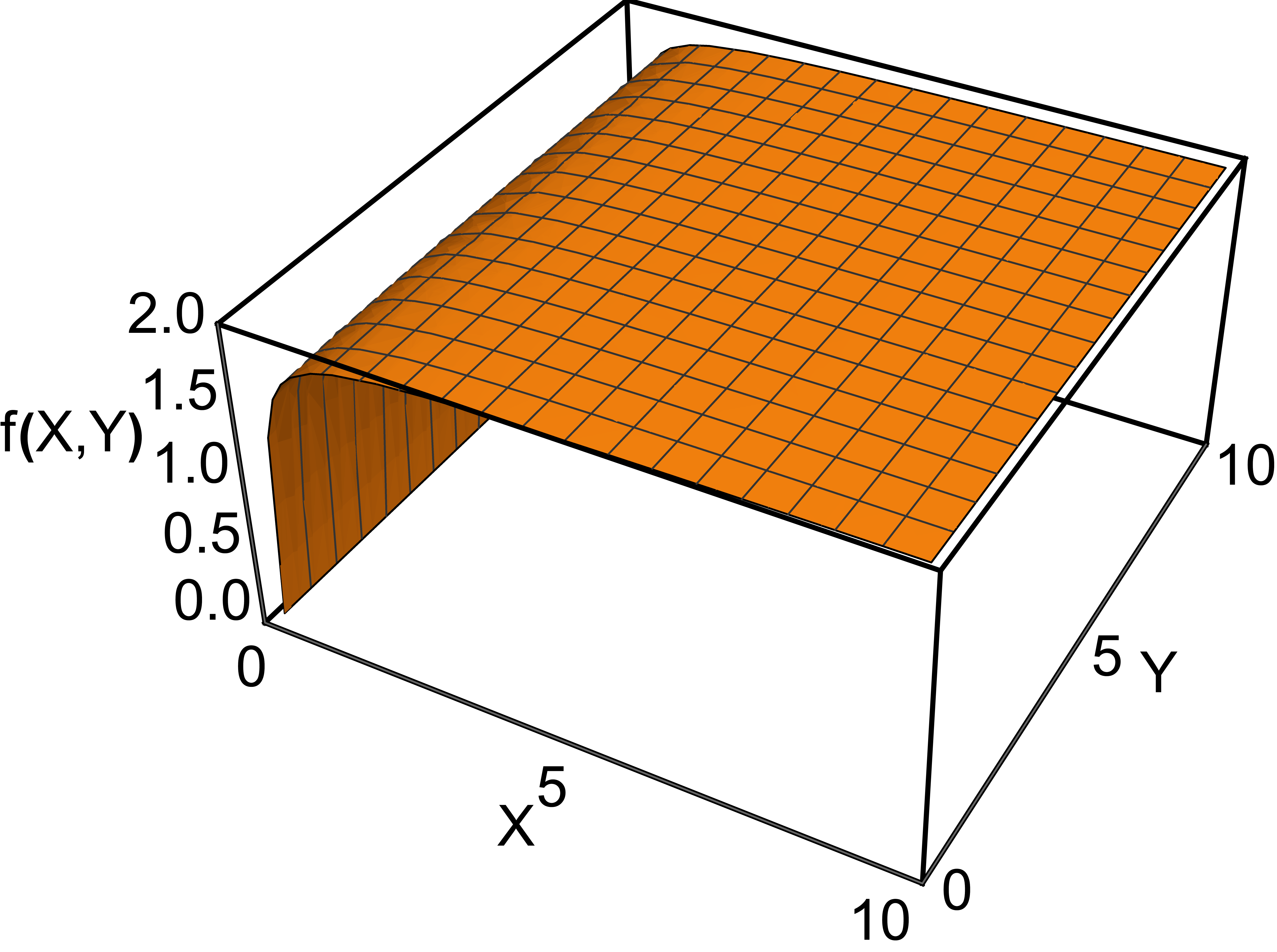}\hspace{0em}
  \caption{}
        \label{fig:a23}
    \end{subfigure}
    \begin{subfigure}[b]{0.45\textwidth}
  \includegraphics[width=5cm,height=3.5cm]{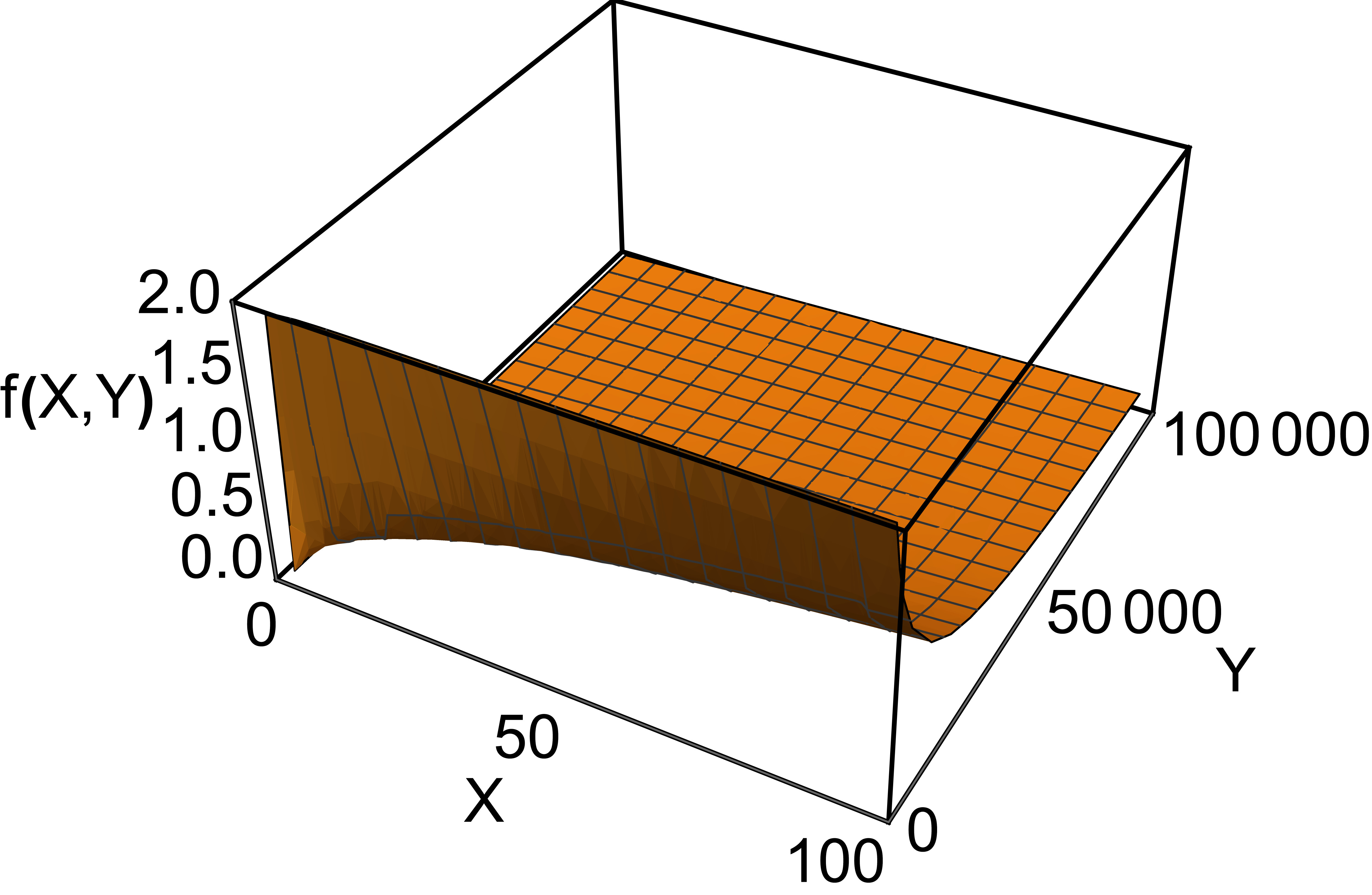}\hspace{0em}
  \caption{}
        \label{fig:b23}
    \end{subfigure}
\caption{The functional response given in (\ref{GENBDE}). Parameter values: (a), $c_1=10$, $c_2=15$, $d=1$, $A_1=0.5$, $A_{21}=2$; (b), $c_1=20$, $c_2=10$, $d=1$, $A_1=0.5$, $A_{21}=2$. }
\label{fig:sec2.3}      
\end{figure}

\section{Application: a functional response with density dependent handling time}\label{sec:3}

\subsection{Individual level reactions, population equations and fast equilibrium}\label{sec3:1}

We analyse the same scenario presented in Section \ref{sec2:3}, that is, the predator population is structured in two classes, the searching predators $S$ and handling predators $H$. In this application we consider only one prey state. We define with $c_1$ the attack rate. We assume moreover that the handlers return to the searching state with prey density dependent rate $c_2X$ or spontaneously with rate $d$. This assumption is ecologically reasonable if the uptake of resources from the corpse of the killed prey declines with the handling time. The capture of a new prey becomes then worthwhile especially if the prey density is high and an indicator of the overall prey density is given by the average time until a new prey gets into the handling predator's field of vision. Furthermore, the density dependent transition may be the result of an actual encounter with a prey individual or may be triggered by prey kairomones (as assumed in the previous section, but with the roles of prey and predator reversed). All these interactions are fast time processes in comparison to birth and natural death and are summarised below:
\begin{eqnarray}\nonumber
\mathcircled{S}& +\;\mathcircled{X} &  \xrightarrow{c_1} \mathcircled{H} \quad  \textit{the searching predator enters the handling state (prey capture)}\\ \nonumber
\mathcircled{H}& +\;\mathcircled{X} &  \xrightarrow{c_2} \mathcircled{S}+ \mathcircled{X} \quad \textit{the predator quits handling with prey-dependent rate}\\
\mathcircled{H}& \xrightarrow{\;\;d\;\;}& \mathcircled{S} \quad \textit{the predator quits handling spontaneously}\\ \nonumber
\end{eqnarray}
Note that in the first transition the prey disappears due to prey capture, while in the second the prey acts merely as a catalyst.\\

The corresponding population level differential equations of the fast time dynamics are given by applying the law of mass action and the time scale separation is presented in details in Appendix \ref{newapp}:
\begin{equation}\label{DEPHTSYST}
\left\{\begin{array}{l}
\frac{dS}{dt}=-c_1XS+c_2XH+dH\\
\frac{dH}{dt}=c_1XS-c_2XH-dH
\end{array}\right.
\end{equation}
The total predator population is constant and given by $Y=S+H$. Then, we can reduce the system of equations to only one equation and solve the steady state equation. The fast dynamics equilibrium is $(\hat{S},\hat{H})=\left(\frac{(d+c_2X)Y}{d+(c_1+c_2)X},\frac{c_1XY}{d+(c_1+c_2)X} \right)$. 

\subsection{Functional response}

We can now derive the corresponding functional response
\begin{equation}\label{DEPHTFR}
f(X)=\frac{c_1X\hat{S}}{Y}=\frac{c_1X(c_2X+d)}{d+(c_1+c_2)X}
\end{equation}
The functional response in (\ref{DEPHTFR}) is a two-parameter function, because only the ratios $\frac{c_1}{d}$ and $\frac{c_2}{d}$ matter and the shape of the functional response is therefore affected by these fractions, as shown in Fig.~\ref{fig:sec3}. Furthermore, by considering the formulation
\begin{equation}\label{DEPHTFRII}
f(X)=\frac{c_1X}{1+\frac{1}{c_2X+d}c_1X},
\end{equation}
we observe that the functional response in (\ref{DEPHTFR}) is like the Holling type II functional response, but with a density dependent handling time given by the ratio $\frac{1}{c_2X+d}$. The higher the prey density is, the faster the predator will quit handling and start searching for fresh food. Note that such behaviour is functional because if the prey is scarce, then the predators will tend to diligently consume the food source until it is completely exhausted and handle the prey longer than if the prey were abundant.

\begin{figure}[h!]
\centering
\begin{subfigure}[b]{0.326\textwidth}
  \includegraphics[width=\textwidth]{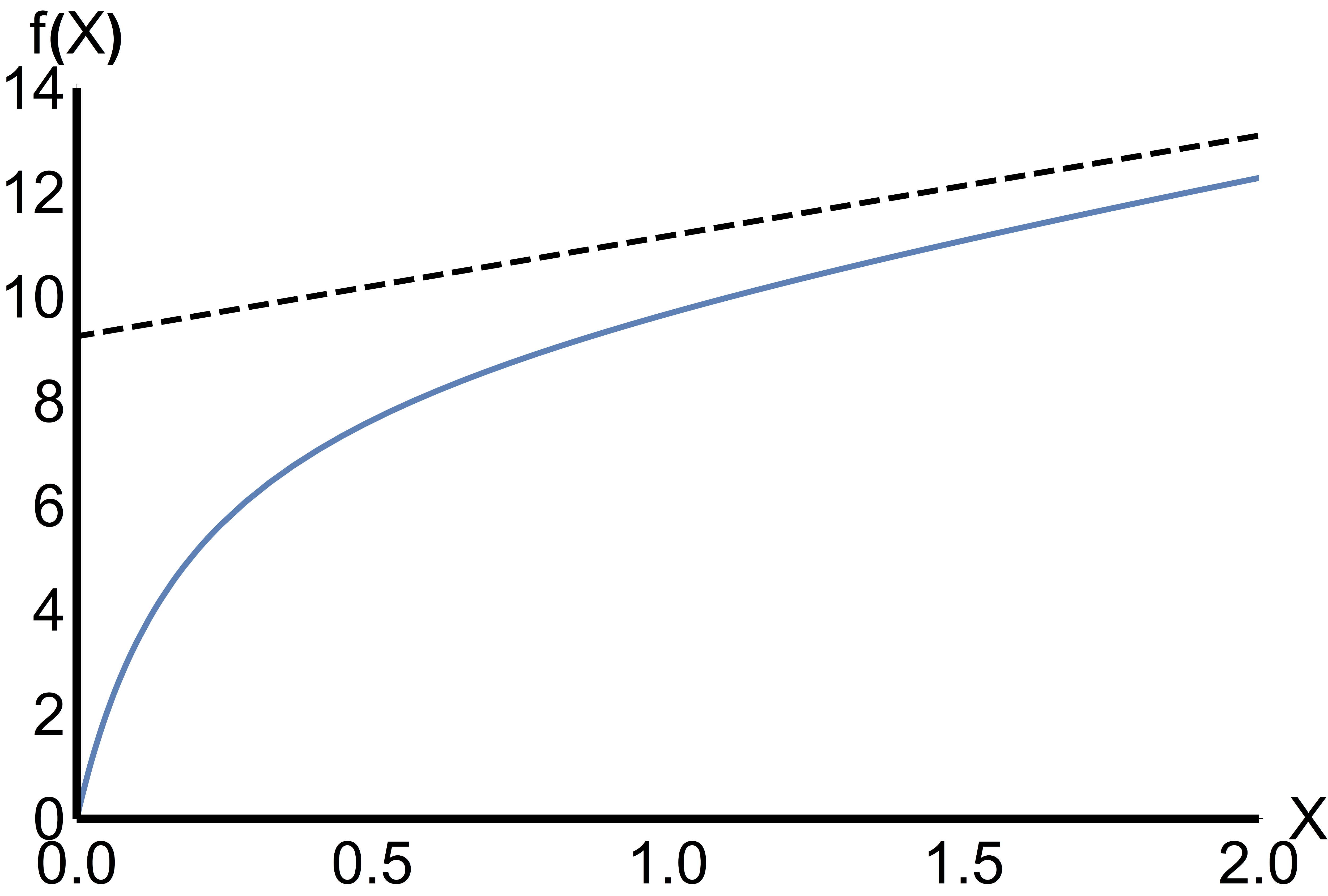}\hspace{0em}
  \caption{}
        \label{fig:a3}
    \end{subfigure}
    \begin{subfigure}[b]{0.326\textwidth}
  \includegraphics[width=\textwidth]{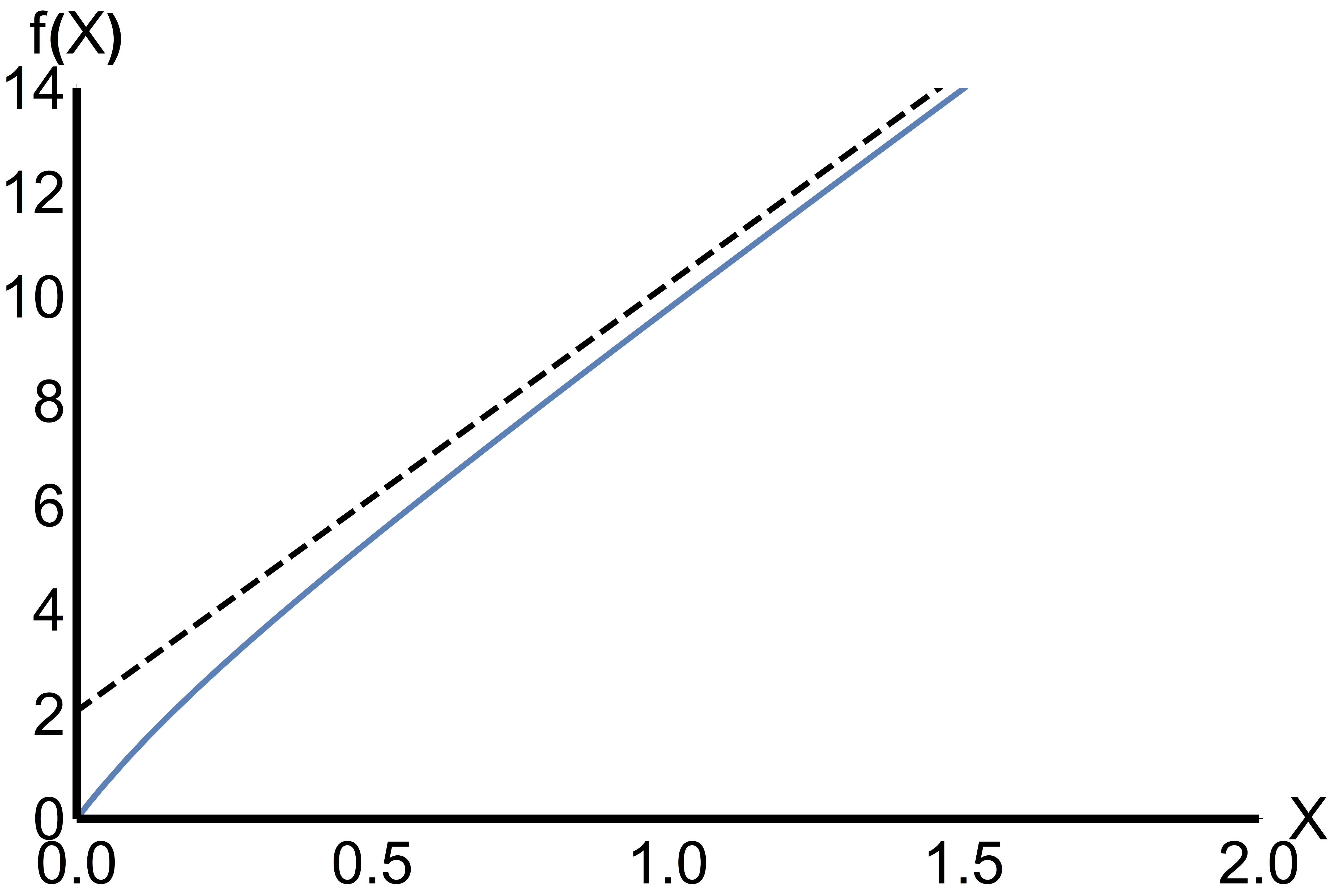}\hspace{0em}
  \caption{}
        \label{fig:b3}
    \end{subfigure}
     \begin{subfigure}[b]{0.326\textwidth}
  \includegraphics[width=\textwidth]{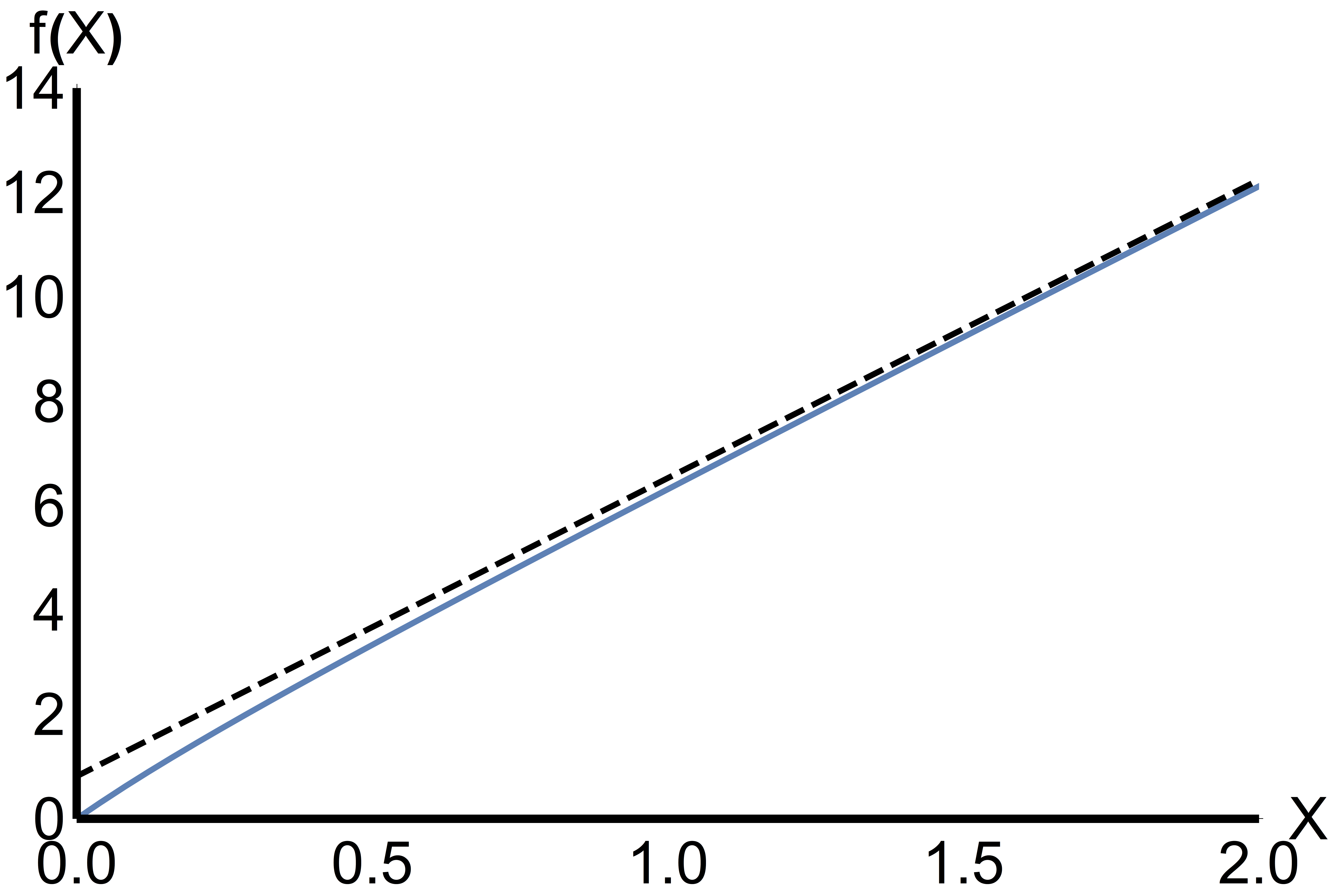}
  \caption{}
        \label{fig:c3}
    \end{subfigure}
\caption{The functional response given in (\ref{DEPHTFR}) here presented for $d=1$ and the following values for $c_1$ and $c_2$: (a), $c_1=50$, $c_2=2$; (b), $c_1=15$, $c_2=18$; (c), $c_1=8$, $c_2=20$. The dashed lines represent the asymptotes corresponding to each curve.}
\label{fig:sec3}      
\end{figure}

\subsection{Predator numerical response}
We assume that the predators in the handling state give birth. Therefore the \emph{per capita} reproduction rate of the predators is proportional to
\begin{equation}
\frac{\hat{H}(X)}{Y}=\frac{f(X)}{c_2X+d}=\gamma(X)f(X)
\end{equation}
The conversion factor $\gamma(X)=\frac{1}{c_2X+d}$ is then a decreasing function of the prey density $X$ and in particular it is proportional to the average time spent handling the prey. 
If we assume that in this particular scenario the predator \emph{per capita} natural mortality rate is the same for the predators both in the searching state and in the handling state, then the mortality rate is constant as in (\ref{BASIC}). 

\section{Application: type III functional response and corresponding predator numerical response}\label{sec:4}

\subsection{Individual level reactions, population equations and fast equilibrium}\label{sec4:1}

We now assume that the searching predators are divided into two subclasses according to the level of starvation, well-fed ($S_1$) and starving ($S_2$). It is natural then to assume different capture rates for the two classes, e.g. starving predators have a lower capture rate than satiated predators, $c_1>c_2$. We suppose again the handling predators in class $H$. We denote with $H_1$ the predators that enter state $H$ from state $S_1$ and with $H_2$ the predators that enter $H$ from state $S_2$. The predators in $H_1$ and $H_2$ handle the prey for, on average, $\frac{1}{d_1}$ units of time, then they enter class $S_1$. If a well-fed predator does not capture a prey in on average $\frac{1}{d_2}$ units of time, it enters the state $S_2$. We assume that starving predators have very low \emph{per capita} fecundity since they are hunting to survive and restock their basic energy reserve. On the contrary the well-fed predators invest part of the energy gained from the food source for reproduction. In this case, the \emph{per capita} fecundities for the two types of consumers, namely $\Gamma_1$ and $\Gamma_2$, differ. In particular $\Gamma_1>\Gamma_2$, since in case of starvation the individuals are likely to cease energy allocation to reproduction, see \cite{kooijman2010dynamic}. We assume the offspring to be in state $S_2$. We consider the transitions between the different predator states to be fast processes with respect to birth and death. In addition, we consider the total predator population size considerably smaller than the total prey size. The time scale separation is achieved through the scaling given in Appendix \ref{newapp}. We summarise below the individual level processes:
\begin{eqnarray}\nonumber
\mathcircled{S_1}&+\;\;\mathcircled{X}& \xrightarrow{c_1} \mathcircled{H_1} \quad \textit{the well-fed predator enters the handling state (prey capture)} \\ \nonumber
\mathcircled{S_2}&+\;\;\mathcircled{X}& \xrightarrow{c_2} \mathcircled{H_2} \quad \textit{the starving predator enters the handling state (prey capture)} \\ \nonumber
\mathcircled{H_1}& \xrightarrow{\;d_1\;} &\mathcircled{S_1} \quad \textit{from the handling to the well-fed state}\\ \nonumber
\mathcircled{H_2}& \xrightarrow{\;d_1\;}& \mathcircled{S_1} \quad \textit{from the handling to the well-fed state} \\ \label{TYPEIIIODE2int}
\mathcircled{S_1}& \xrightarrow{\;d_2\;}& \mathcircled{S_2} \quad \textit{from the well-fed state to the starving state}
\end{eqnarray}

The equations that describe the population level fast time dynamics are given by 
\begin{equation}\label{TYPEIIIODE2}
\left\{\begin{array}{l}
\frac{dS_1}{dt}=-c_1 X S_1+d_1 \left(H_1+H_2\right)-d_2 S_1\\
\frac{dS_2}{dt}=-c_2 X S_2 + d_2 S_1\\
\frac{dH_1}{dt}=c_1 X S_1 -d_1 H_1\\
\frac{dH_2}{dt}=c_2  X S_2-d_1 H_2\\
\end{array}\right.
\end{equation}
If the total predator density is constant, then the conservation law $\frac{dY}{dt}=\frac{dS_1}{dt}+\frac{dS_2}{dt}+\frac{dH_1}{dt}+\frac{dH_2}{dt}=0$ holds. The fast dynamics is settled on the asymptotically stable equilibrium 
\begin{eqnarray}\label{TYPEIIIODEss2}
\hat{S}_1=\frac{c_2 X}{d_2\left(1+c_2\frac{1}{d_1}X\right)+c_2X\left(1+c_1\frac{1}{d_1}X\right)}Y\\\nonumber
\hat{S}_2=\frac{d_2}{d_2\left(1+c_2\frac{1}{d_1}X\right)+c_2X\left(1+c_1\frac{1}{d_1}X\right)}Y\\\nonumber
\hat{H}_1=\frac{\frac{c_1c_2}{d_1}X^2}{d_2\left(1+c_2\frac{1}{d_1}X\right)+c_2X\left(1+c_1\frac{1}{d_1}X\right)}Y\\\nonumber
\hat{H}_2=\frac{\frac{c_2d_2}{d_1}X}{d_2\left(1+c_2\frac{1}{d_1}X\right)+c_2X\left(1+c_1\frac{1}{d_1}X\right)}Y
\end{eqnarray}\\

\subsection{Functional response}

The functional response corresponding to the dynamics above is
\begin{equation}\label{TYPEIII}
f(X)=\frac{c_1 X \hat{S}_1 + c_2 X \hat{S}_2}{Y}=\frac{c_2 X (d_2 + c_1 X)}{d_2\left(1+c_2\frac{1}{d_1}X\right)+c_2X\left(1+c_1\frac{1}{d_1}X\right)}
\end{equation}
The functional response in (\ref{TYPEIII}) is an increasing function of $X$ up to a saturating level given by $d_1$, as shown in Fig. \ref{fig:a4}. It is a type III functional response of the form 
\begin{equation}\label{typeIIIgen}
f(X)=\frac{aX+bX^2}{1+cX+dX^2}
\end{equation}
with $a=c_2$, $b=c_1 c_2 \frac{1}{d_2}$, $c=c_2\left(\frac{1}{d_1}+\frac{1}{d_2}\right)$ and $d=\frac{c_1 c_2}{d_1 d_2}$. A necessary condition for the function to be convex in the neighbourhood of $0$ is that the second derivative has to be positive. This is true if and only if $\frac{c_1}{c_2}>1+d_2\frac{1}{d_1}$, that is the attack rate of the well-fed predators, $c_1$, is sufficiently higher than the attack rate of the starving predators, $c_2$. This result is consistent with the biological interpretation of the individual level dynamics that we have provided.\\

In the literature the most common form of the Holling type III functional response is
\begin{equation}\label{typeIIItrad}
f(X)=\frac{bX^2}{1+dX^2}
\end{equation}
The function in (\ref{typeIIIgen}) can be mathematically reduced to the function in (\ref{typeIIItrad}), if we let $c_2\rightarrow 0$ and $c_1\rightarrow\infty$, such that the product $c_1c_2$ stays constant. This can be interpreted as well-fed predators being extremely efficient hunters, while starving predators are very unsuccessful searchers.

\subsection{Predator numerical response}

On the slow time scale, we suppose that the reproduction rate of the predators is proportional to the density of the handling predators at the fast time equilibrium and that the energy intake from the consumption of the prey is partly allocated to reproduction. At an individual level, we consider the following reactions, which happen at a slow time scale with respect to the interactions modelled in (\ref{TYPEIIIODE2int}):
\begin{eqnarray}
\mathcircled{H_1}& \xrightarrow{\Gamma_1} \mathcircled{H_1}+\mathcircled{S_2}& \textit{the well-fed predator in the handling state}\\\nonumber
& & \textit{produces offspring in state $S_2$}\\ \nonumber
\mathcircled{H_2}& \xrightarrow{\Gamma_2} \mathcircled{H_2}+\mathcircled{S_2}& \textit{the starving predator in the handling state}\\\nonumber
&& \textit{produces offspring in state $S_2$}  \nonumber
\end{eqnarray}

The \emph{per capita} reproduction rate in this particular scenario is not a constant, but it is an increasing function of the total prey population size (see Fig. \ref{fig:c4}) with saturating value given by the fecundity rate of the well-fed predator individuals, $\Gamma_1$, at high prey density:
\begin{equation}\label{frgamma}
\frac{\Gamma_1  \hat{H}_{1}+\Gamma_2  \hat{H}_{2} }{Y}=\frac{c_2\frac{1}{d_1}X \left( \Gamma_1c_1 X+\Gamma_2 d_2  \right)}{d_2\left(1+c_2\frac{1}{d_1}X\right)+c_2X\left(1+c_1\frac{1}{d_1}X\right)}=\gamma(X)f(X).
\end{equation}
The conversion factor
\begin{equation}\label{gamma}
\gamma(X)=\frac{\frac{1}{d_1} \left( \Gamma_1c_1 X+\Gamma_2 d_2\right)}{\left(d_2+c_1 X \right)}
\end{equation}
is a function of the prey density (see Fig. \ref{fig:b4}), saturating on the value $\frac{\Gamma_1}{d_1}$ when the food source is abundant. In fact, $\frac{1}{d_1}$ is the average time spent handling the prey and $\Gamma_1$ denotes the \emph{per capita} fecundity of the well-fed predators. Furthermore, the function $\gamma(X)$ is increasing if and only if $\Gamma_1>\Gamma_2$, which is consistent with the biological assumptions given in Section \ref{sec4:1}. \\

Suppose further that the predators in the two searching states differ not only in their capture rates, but also in their respective mortality rates $\delta_1$ and $\delta_2$, with $\delta_2>\delta_1$. The individual level interactions occurring at the slow time scale which model the natural mortality of the predators are then given by
\begin{eqnarray}
\mathcircled{S_1}& \xrightarrow{\delta_1}  \dagger \qquad& \textit{natural death of the well-fed predator}\\\nonumber
\mathcircled{S_2}& \xrightarrow{\delta_2} \dagger \qquad & \textit{natural death of the starving predator}\nonumber
\end{eqnarray}

Under these assumptions we note that the average per capita mortality rate is given by
\begin{equation}\label{DEATH}
\delta(X)=\frac{\delta_1 \hat{S}_1+\delta_2 \hat{S}_2}{Y}=\frac{\delta_1c_2 X+\delta_2 d_2}{d_2\left(1+c_2\frac{1}{d_1}X\right)+c_2X\left(1+c_1\frac{1}{d_1}X\right)}
\end{equation}
Mortality is then no longer constant, but a decreasing function of the prey density, as shown in Fig. \ref{fig:d4}. In particular, in the absence of the prey, the function $\delta$ takes the value of the per capita mortality rate of the starving predators, $\delta_2$.

\begin{figure}[h!]
\centering
\begin{subfigure}[b]{0.45\textwidth}
  \includegraphics[width=\textwidth]{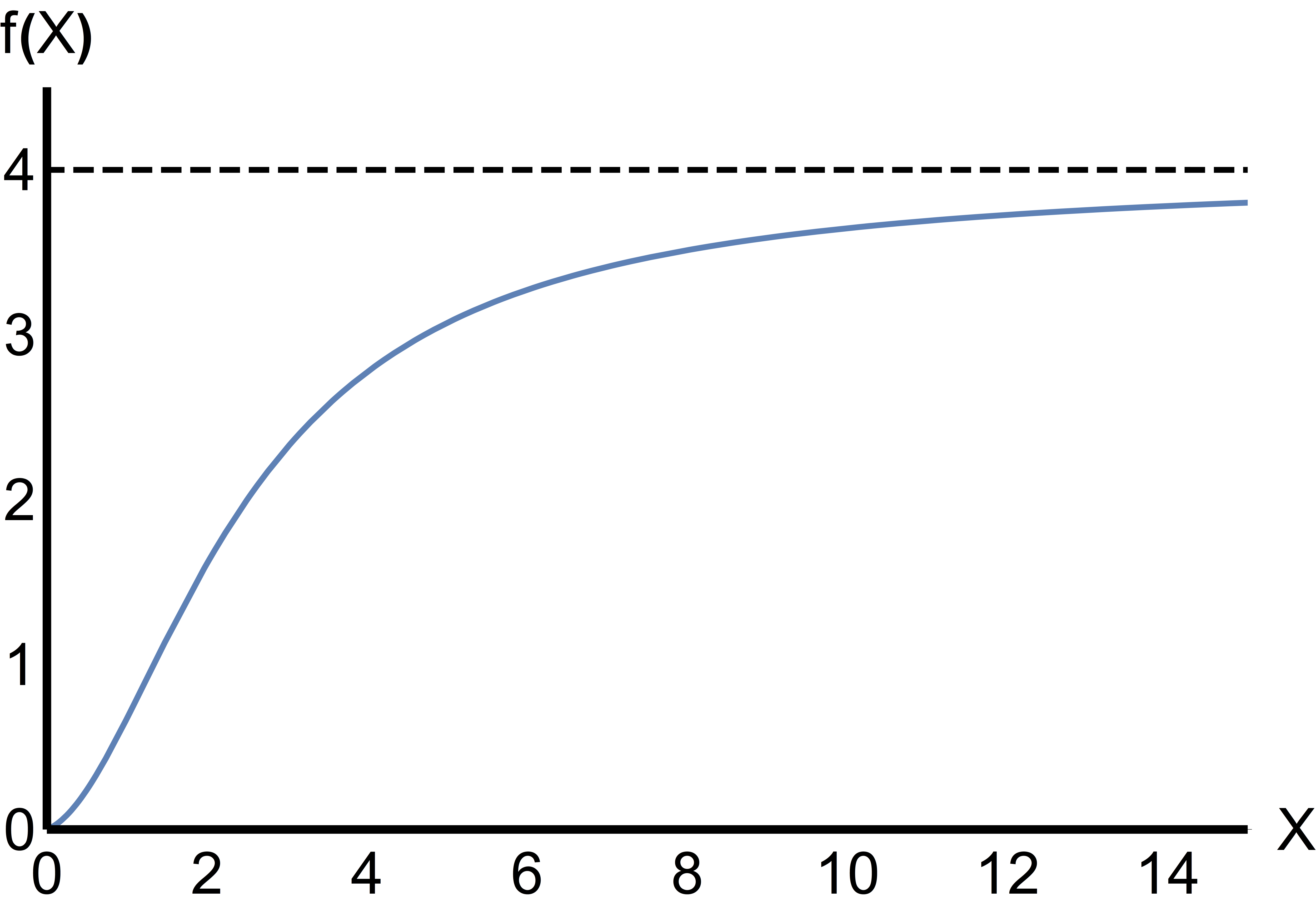}\hspace{0em}
  \caption{}
        \label{fig:a4}
    \end{subfigure}
    \begin{subfigure}[b]{0.45\textwidth}
  \includegraphics[width=\textwidth]{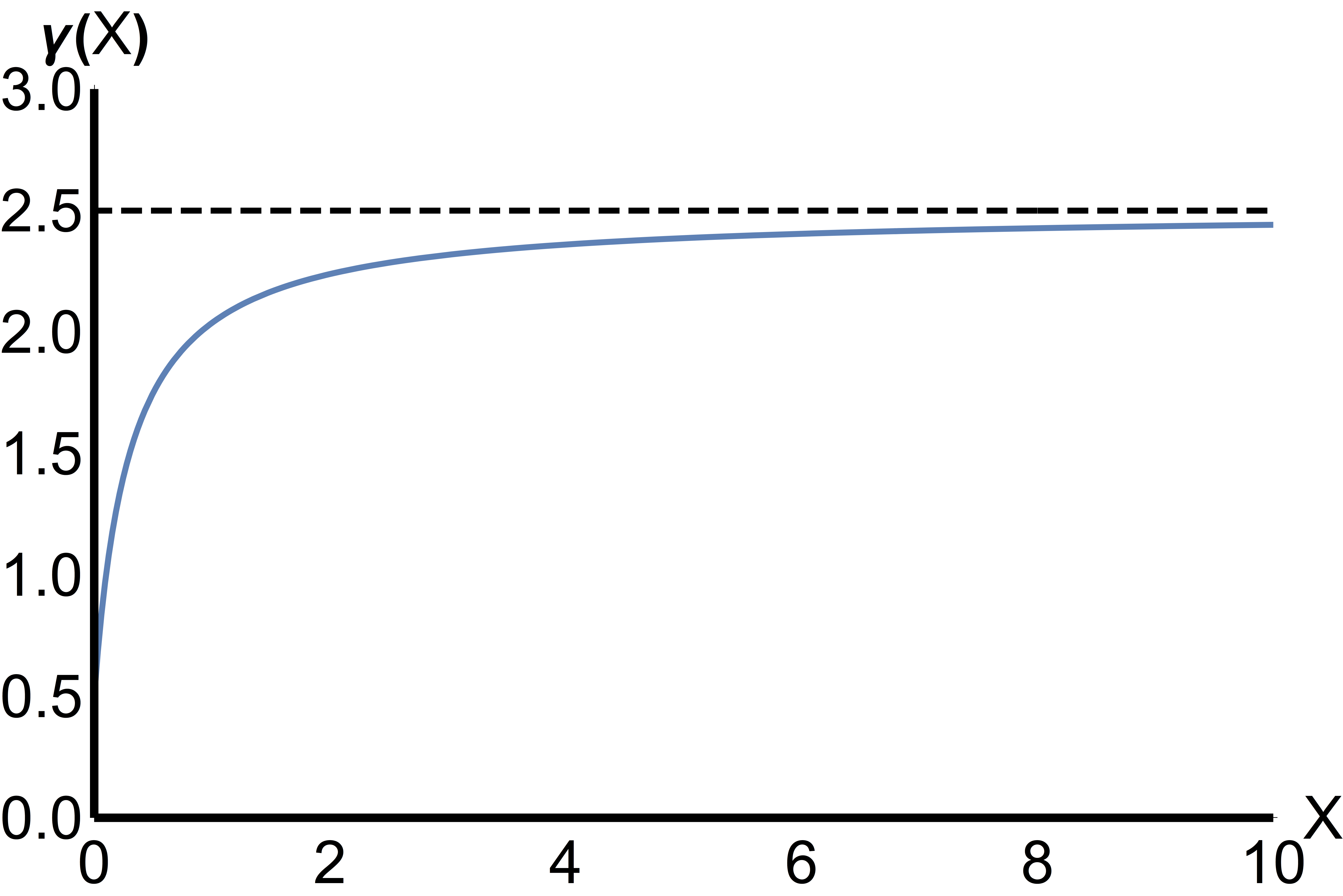}\hspace{0em}
  \caption{}
        \label{fig:b4}
    \end{subfigure}
     \begin{subfigure}[b]{0.45\textwidth}
  \includegraphics[width=\textwidth]{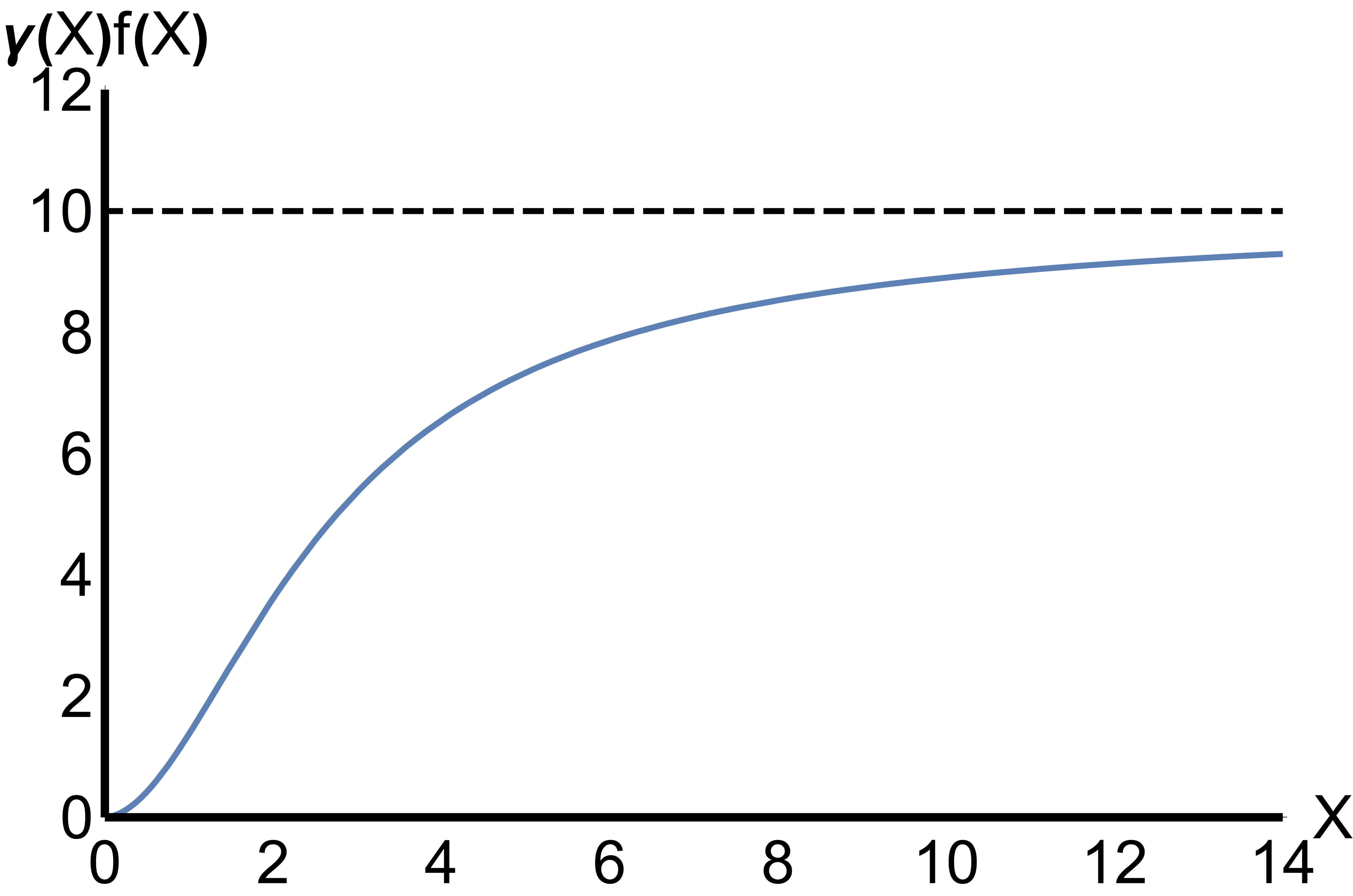}
  \caption{}
        \label{fig:c4}
    \end{subfigure}
    \begin{subfigure}[b]{0.45\textwidth}
  \includegraphics[width=\textwidth]{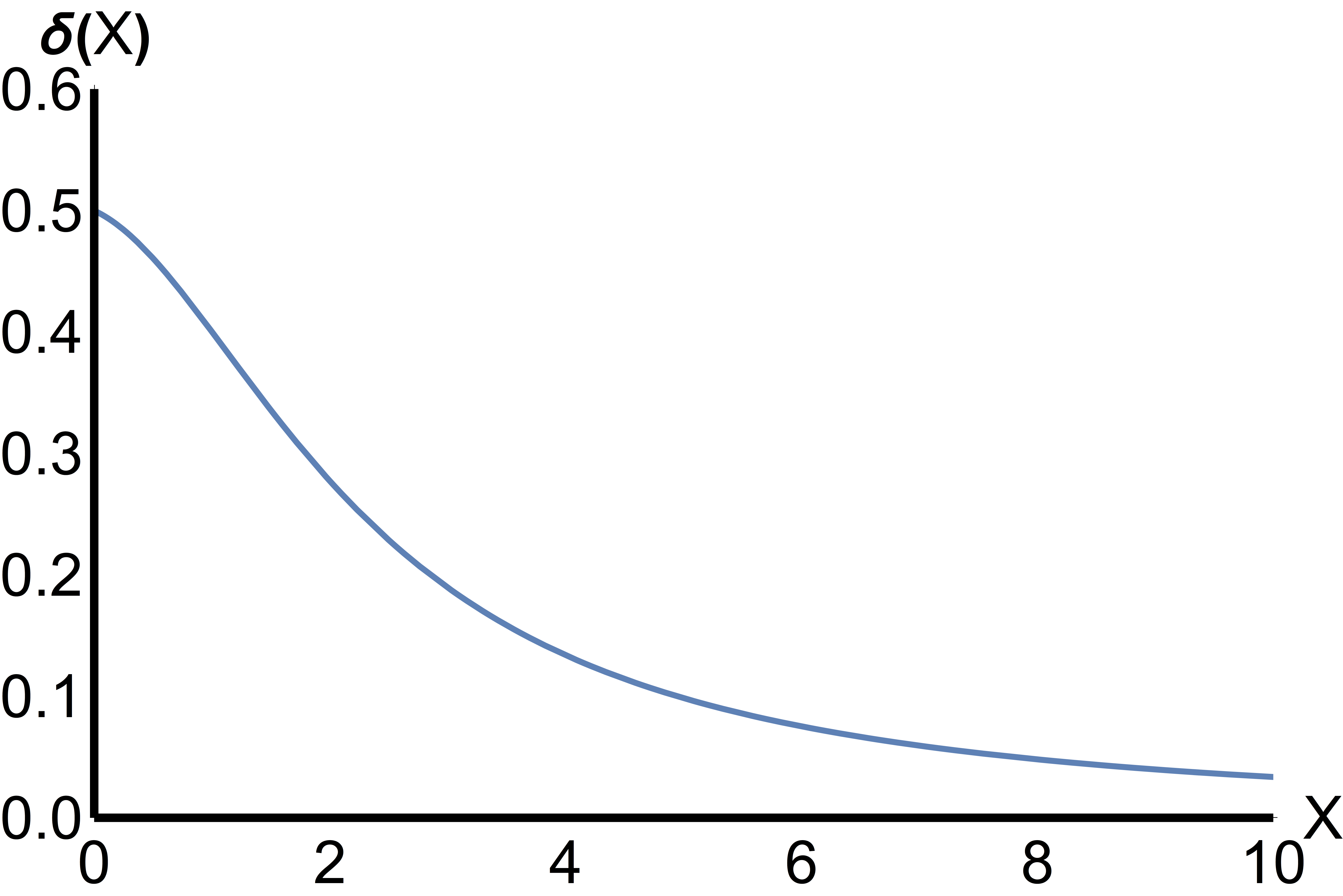}
  \caption{}
        \label{fig:d4}
    \end{subfigure}
\caption{(a): The functional response given in (\ref{TYPEIII}), with saturating value $d_1$ (dashed line). (b): The conversion factor $\gamma(X)$ defined in (\ref{gamma}) with saturating level given by $\frac{\Gamma_1}{d_1}$ (dashed line). Note the positive intercept with the vertical axes, corresponding to the value $\frac{\Gamma_2}{d_1}$. (c): the \emph{per capita} reproduction rate $\gamma(X)f(X)$, derived in (\ref{frgamma}), with asymptote $\Gamma_1$ (dashed line). (d): The \emph{per capita} mortality rate for the predators given in (\ref{DEATH}). Parameter values: $c_1=10$, $c_2=0.2$, $d_1=4$, $d_2=3$, $\Gamma_1=10$, $\Gamma_2=2$, $\delta_1=0.2$, $\delta_2=0.5$.}
\label{fig:sec4}      
\end{figure}

\subsection{Alternative interpretation of the individual level processes}

As far as the functional response is concerned, the model can be interpreted also in the context of predators structured according to their level of experience. This is the usual interpretation of the Holling type III functional response, although until now a rigorous derivation was missing. Here we show how the Holling type III functional response can be derived with our approach. Suppose that class $S_2$ contains the individuals that lack experience and are not well skilled in capturing the prey, while class $S_1$ includes those experienced individuals with success rate $c_1>c_2$. Note that at low prey density, almost all predators will be inexperienced. Predators in both classes, after interaction with the prey, enter the class of handling predators $H$. The average time spent handling the prey is $\frac{1}{d_1}$ units of time. Predators that have captured (and handled) a prey are considered experienced (class $S_1$), but they lose this status, and hence the ability to capture prey at high rate, after on average $\frac{1}{d_2}$ units of time (transition to class $S_2$). Furthermore, we assume that the individual level processes that determine the interactions between the predator and prey states are fast time reactions with respect to birth and death. The above scenario can be visualised in the following way:
\begin{eqnarray}\nonumber
\mathcircled{S_1}&+\;\mathcircled{X}& \xrightarrow{c_1} \mathcircled{H} \quad \textit{the experienced predator enters the handling state (prey capture)}\\ \nonumber
\mathcircled{S_2}&+\;\mathcircled{X}& \xrightarrow{c_2 } \mathcircled{H} \quad \textit{the inexperienced predator enters the handling state (prey capture)}\\ \nonumber
\mathcircled{H}& \xrightarrow{\;d_1\;}& \mathcircled{S_1} \quad \textit{from the handling to the experienced state}\\ 
\mathcircled{S_1}& \xrightarrow{\;d_2\;}& \mathcircled{S_2} \quad \textit{from the experienced to the inexperienced state}
\end{eqnarray}

The ODE system which describes the population level  fast time dynamics is the following
\begin{equation}\label{TYPEIIIODE}
\left\{\begin{array}{l}
\frac{dS_1}{dt}=-c_1X S_1+d_1 H-d_2 S_1\\
\frac{dS_2}{dt}=-c_2 X S_2 + d_2 S_1\\
\frac{dH}{dt}=c_1 X S_1 +c_2  X S_2-d_1 H\\
\end{array}\right.
\end{equation}
The total predator density $Y$ is constant, such that $\frac{dY}{dt}=\frac{dS_1}{dt}+\frac{dS_2}{dt}+\frac{dH}{dt}=0$
Then the fast dynamics settles on the asymptotically stable equilibrium:
\begin{eqnarray}\label{TYPEIIIODEss}
\hat{S}_1=\frac{c_2 X}{d_2\left(1+c_2\frac{1}{d_1}X\right)+c_2X\left(1+c_1\frac{1}{d_1}X\right)}Y\\\nonumber
\hat{S}_2=\frac{d_2}{d_2\left(1+c_2\frac{1}{d_1}X\right)+c_2X\left(1+c_1\frac{1}{d_1}X\right)}Y\\\nonumber
\hat{H}=\frac{c_1c_2\frac{1}{d_1}X^2+c_2d_2\frac{1}{d_1}X}{d_2\left(1+c_2\frac{1}{d_1}X\right)+c_2X\left(1+c_1\frac{1}{d_1}X\right)}Y
\end{eqnarray}
The corresponding functional response has already been given in (\ref{TYPEIII}).

\section{Application: a functional response that induces an Allee effect in the predator population dynamics}\label{sec:5}

\subsection{Individual level reactions, population equations and fast equilibrium}\label{sec5:1}

In the following model, we assume that the prey has a natural tendency to seek protection, but that searching predators are able to overcome the prey defenses, for example by causing them to panic and by attacking the isolated individuals. As shown in \cite{geritz2012mechanistic}, this leads to fast-processes that are the opposite of those in the Beddington-DeAngelis model. The individual level reactions are:
\begin{eqnarray}\nonumber
\mathcircled{E} &\xrightarrow{\;\;a\;\;}& \mathcircled{P} \quad \textit{the exposed prey finds a refuge} \\ \nonumber
\mathcircled{P}&+\;\mathcircled{S} &\xrightarrow{b} \mathcircled{E}+\mathcircled{S} \quad \textit{the protected prey leaves the refuge}\\ \nonumber
\mathcircled{S}&+\;\mathcircled{E}&\xrightarrow{c} \mathcircled{H} \quad \textit{the searching predator enters the handling state (prey capture)}\\ \label{deabereact}
\mathcircled{H} &\xrightarrow{\;\;d\;\;}& \mathcircled{S} \quad \textit{the handling predator quits handling}
\end{eqnarray}

The corresponding differential equations for the fast time population dynamics are
\begin{equation}\label{ode5}
\left\{\begin{array}{l}
\frac {dE}{dt}(t)=bSP-aE\\
\frac{dP}{dt}(t)=-bSP+aE\\
\frac {dS}{dt}(t)=-cES+dH\\
\frac{dH}{dt}(t)=+cES-dH
\end{array}\right.
\end{equation}
with the conservation laws $E+P=X$ and $S+H=Y$ (see Appendix \ref{newapp} for details on the time scale separation between the fast and slow dynamics).
The corresponding fast time equilibrium is given by
\begin{eqnarray}
\hat{E}&=&\frac{1}{2p}\left(-q\left(p+Y\right)+\sqrt{\Delta_{p,q}(X,Y)}\right)\\
\hat{P}&=&\frac{1}{2p}\left(p\left(q+2X\right)+qY-\sqrt{\Delta_{p,q}(X,Y)}\right)\\
\hat{S}&=&\frac{1}{2\left(q+X\right)}\left(-q\left(p-Y\right)+\sqrt{\Delta_{p,q}(X,Y)}\right)\\
\hat{H}&=&\frac{1}{2\left(q+X\right)}\left(pq+qY+2XY-\sqrt{\Delta_{p,q}(X,Y)}\right).
\end{eqnarray}
where $p=\frac{a}{b}$, $q=\frac{d}{c}$ and $\Delta_{p,q}(X,Y)=q\left(p^2q+2p\left(q+2X\right)Y+qY^2\right)$. In the Appendix~\ref{app2}, we give the phase portrait corresponding to the system in (\ref{ode5}), for different values of the parameters, in order to show that the fast dynamics equilibrium is unique and hyperbolically stable.

\subsection{Functional response}\label{sec5:2}

We derive the corresponding functional response (Fig. \ref{fig:a5})
\begin{equation}\label{frrec2}
f(X,Y)= \frac{c\hat{E}\hat{S}}{Y}= \frac{cq}{2\left(q+X \right) Y} \left( pq+qY+2XY- \sqrt{\Delta_{p,q}(X,Y)} \right).
\end{equation}\\

When $p\rightarrow 0$, that is $a\rightarrow 0$ or $b\rightarrow \infty$, the functional response tends to the Holling type II functional response, since the prey is most of the time available for being captured:
\begin{equation}\label{lim1}
f(X,Y)=\frac{cX}{1+\frac{c}{d}X}.
\end{equation}
The case in which $q\rightarrow \infty$, that is $c\rightarrow 0$ or $d\rightarrow \infty$, corresponds to the scenario where the predators are almost all the time searching. Therefore, the predators searching and attacking the prey correspond to the total predators $Y$ and the portion of prey subjected to predation is given by $\frac{bY}{a+bY}X=\frac{Y}{p+Y}X$, with $\frac{bY}{a+bY}$ being the probability for the prey to be in the vulnerable state. Taylor expanding with respect to $\frac{1}{q}$ near zero and retaining only the lowest order term in $\frac{1}{q}$, we get (Fig. \ref{fig:b5})
\begin{equation}\label{lim2}
f(X,Y)=\frac{cXY}{p+Y}.
\end{equation}

We note that if both $q\rightarrow \infty$ and $p\rightarrow 0$, then the functional response in (\ref{frrec2}) becomes linear, as in the Holling type I functional response. In this case the predators handle the prey for an infinitely short time and the prey is most of the time exposed to the predators' attacks. The function is increasing with the attack rate $c$. This is a typical functional response for filter feeders as shown in \cite{jeschke2004consumer}.

\subsection{Prey and predator numerical responses}\label{sec5:3}

If we assume that only the handling predators give birth, then the predator \textit{per capita} birth rate is proportional to 
\begin{equation}\label{birth}
\frac{\hat{H}(X,Y)}{Y}=\frac{f(X,Y)}{d}
\end{equation}
i.e. it is proportional to the functional response, as usual.
On the other hand, when we consider the searching predators $S$ and the handling predators $H$ having different death rates $\delta_1\neq \delta_2$, then the overall \textit{per capita} death rate is 
\begin{equation}\label{death}
\delta_1\frac{\hat{S}(X,Y)}{Y}+\delta_2\frac{\hat{H}(X,Y)}{Y}=\delta_1\left( 1-\frac{f(X,Y)}{d} \right)+\delta_2 \frac{f(X,Y)}{d}
\end{equation}
which is no longer constant, as usually given in the literature, but depends on $X$ and $Y$.\\

We note that the product of $XY$ in the numerator of the functional response in (\ref{lim2}) leads to a squared $Y$ term in the population equations for the prey and the predator. In the predator equation for the slow dynamics, this leads to an Allee effect (i. e. at low predator densities almost all prey are protected and cannot be captured), since the \textit{per capita} birth rate at low predator densities is of order $Y$ (see (\ref{birth})), while, when $q\rightarrow \infty$ and the density of the handling predators is very small, the \textit{per capita} death rate is approximated by the constant value $\delta_1$ (see (\ref{death})). A similar situation was modelled from first principles in \cite{geritz2013group}.\\

If we assume that only the protected prey in class $P$ give birth, i.e., if no other slow interactions between the individual prey or their resources occur, then the prey \textit{per capita} birth rate is proportional to
\begin{equation}
\frac{\hat{P}(X,Y)}{X}=1-\frac{d\cdot f(X,Y)}{cX(d-f(X,Y))}.
\end{equation}
However, when the hiding prey and the available prey have different death rates $\mu_1 \neq \mu_2$ and if there are no other sources of slow death, e.g. interference competition among the prey, then the overall \textit{per capita} death rate is
\begin{equation}
\mu_1 \frac{ \hat{E}(X,Y) }{X}+ \mu_2 \frac{\hat{P}(X,Y)}{X}=\mu_1 \frac{d\cdot f(X,Y)}{cX (d-f(X,Y))}+\mu_2 \left( 1-\frac{d\cdot f(X,Y)}{c X (d-f(X,Y) } \right).
\end{equation}\\

Furthermore, we note that by visualising the prey and predator numerical responses in terms of the functional response, it is possible to understand what the former look like in the limiting cases of the latter as treated above in (\ref{lim1}) and (\ref{lim2}).

\begin{figure}[h!]
\centering
\begin{subfigure}[b]{0.45\textwidth}
  \includegraphics[width=\textwidth]{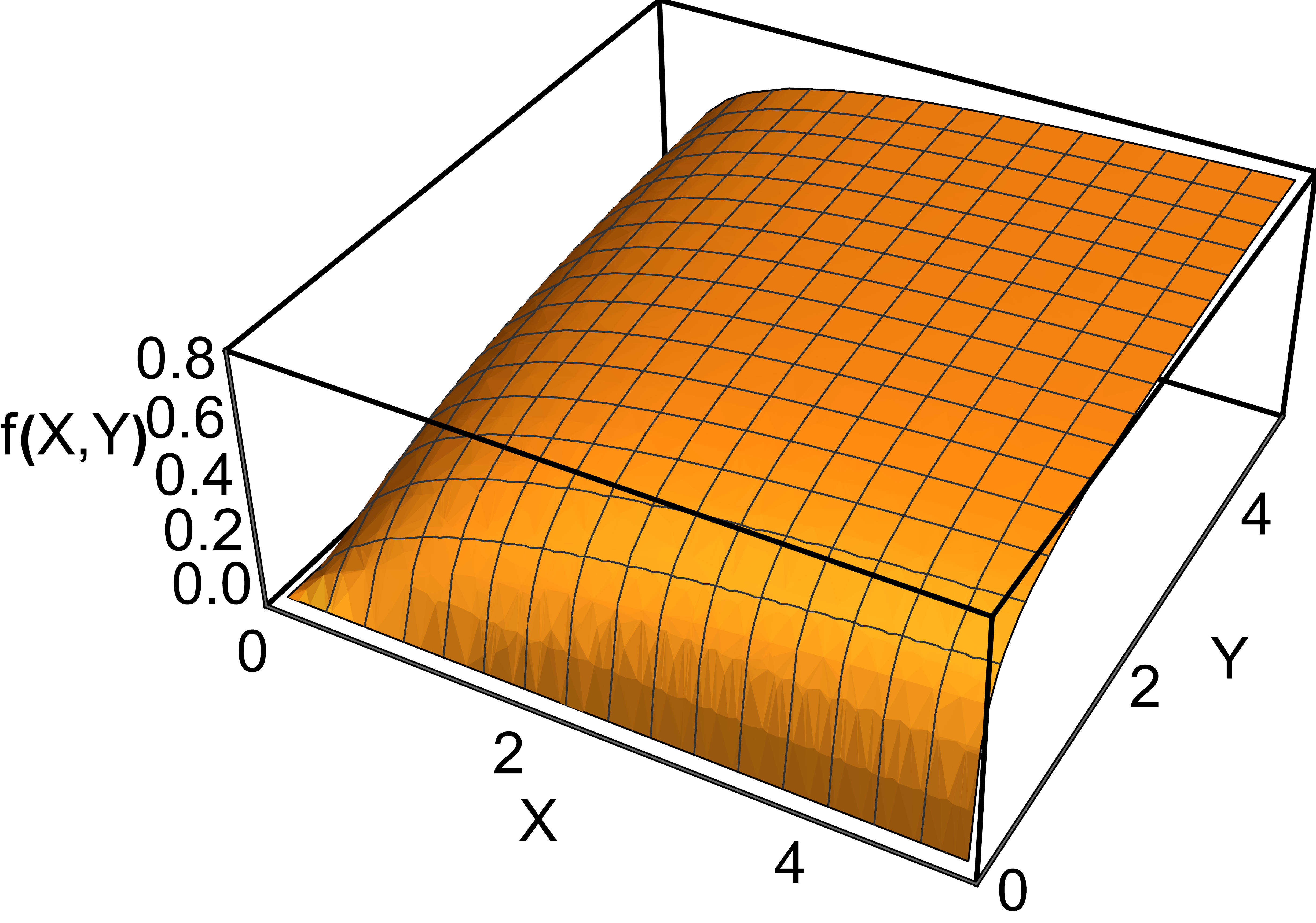}\hspace{0em}
  \caption{}
        \label{fig:a5}
    \end{subfigure}
    \begin{subfigure}[b]{0.45\textwidth}
  \includegraphics[width=\textwidth]{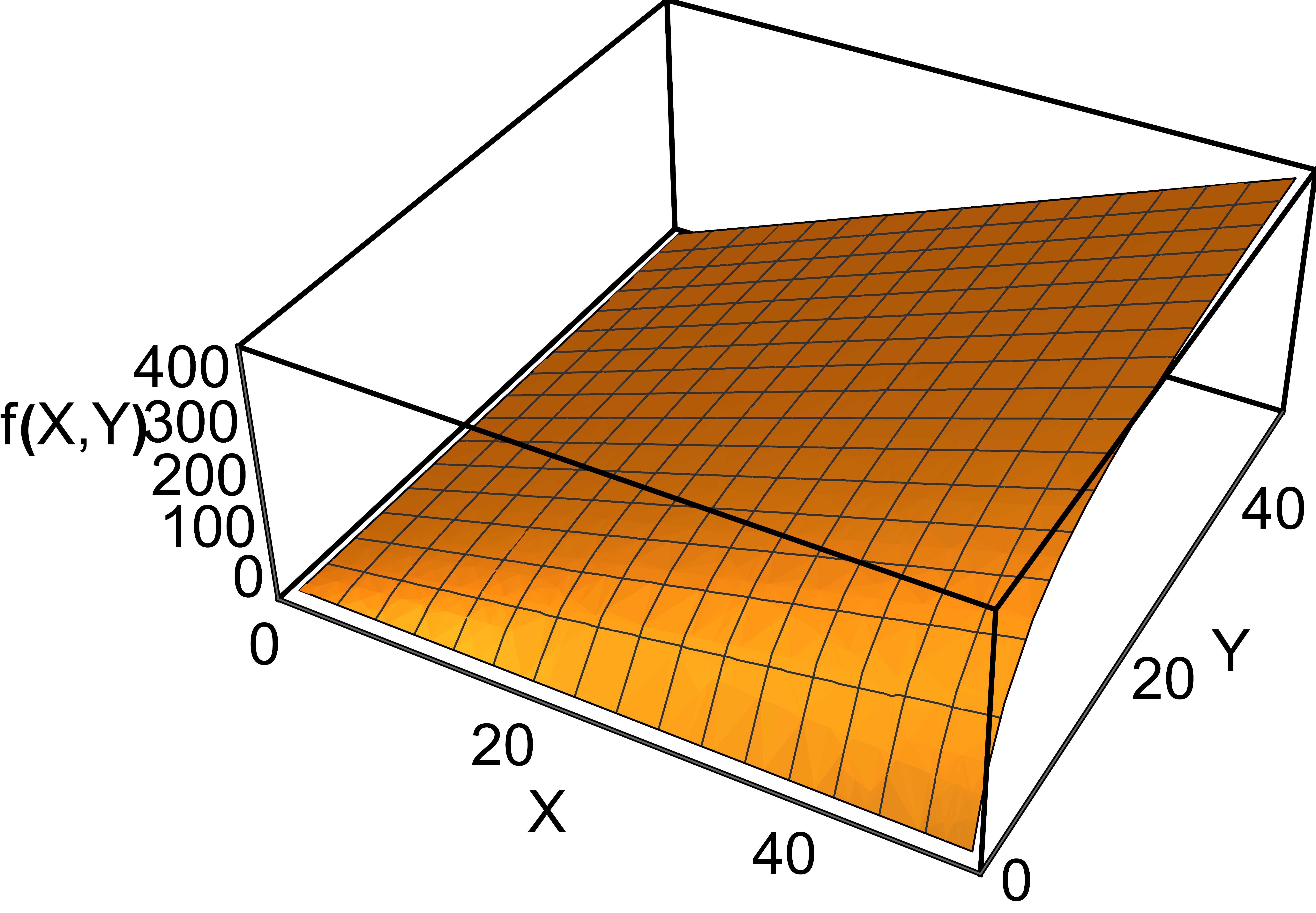}\hspace{0em}
  \caption{}
        \label{fig:b5}
    \end{subfigure}
\caption{(a): The functional response given in (\ref{frrec2}). (b): The functional response defined in (\ref{lim2}). Parameter values: $a=1$, $b=0.2$, $c=10$, $d=1$.}
\label{fig:sec5}      
\end{figure}

\section{Conclusions}\label{end}
In this paper, we proposed a method for the derivation of the functional response from a system of prey-predator interactions which occur on a fast time scale, with respect to birth and death. Many functional responses appear in the literature, but they often lack of interpretation at the individual level. The time scale separation argument that we use in this paper is a possible approach to link the macroscopic behaviour of the population to the microscopic dynamics of the state transitions of individuals. Such derivation permits an explicit interpretation of the structure and parameters of the functional response in terms of the individual behaviour. \\

Elements of the two time-scales are implicit in the traditional approach to deriving the Holling type II functional response, as the consumption rate of the predator instantaneously adjusts to the current prey density. Here we formalise the two-time scales approach in a systematic way. Specific instances of this method can be found in the literature, for example in the works by Metz and Diekmann \cite{metz2014dynamics} and Geritz and Gyllenberg \cite{geritz2012mechanistic,geritz2013group}. However, in this paper we embed these instances into a more general and formal framework that, in addition to the predator's functional response, also gives a derivation for the numerical responses of the predator as well as of the prey. \\

In addition to a general outline of the method, we give several concrete applications, including an application that leads to a generalisation of the Holling type III functional response. The functional response Holling type III has been associated with switching between alternative prey depending on their relative abundance. An explicit derivation was given by Van Leeuwen et al. \cite{leeuwen2007population}. Alternatively, the Holling type III functional response can be associated with different hunger states of the searching predators instead of different experience levels. We give here a mechanistic explanation for the latter. The specific form, as found in the literature, is recovered as a limiting case and is easily understood in terms of the explicitly modelled underlying individual behaviour. \\

In another application, the handling predator may abandon its catch if it detects another live prey. This leads to a Holling type II functional response with density-dependent handling time. In particular, both the handling time and the conversion factor are decreasing functions of the prey density. Such behaviour is adaptive if the uptake of resources from the killed prey declines with the handling time.\\

Furthermore, we discuss the functional response corresponding to a simple non-linear system for the fast dynamics, where we consider two states for the predators and for the prey and we are able to compute explicitly the fast dynamics equilibrium. Here the predators may overcome the prey defenses by causing panic among the prey and by attacking the isolated individuals. We model the prey and predator numerical responses by assuming the two species structured by states with different birth and death rates. The results at the population level are consistent with the individual level reactions and show that at low predator densities an Allee effect is likely to appear.\\

The method presented here is not the most general method possible. For example, we did not consider interactions among the prey themselves or the predators themselves like the exchange of information about the presence of prey or predators leading to a change in the motivational state or the state of alertness. Neither did we include states involving more than one individual, such as two predators fighting over a kill, or several prey seeking protection in numbers, or a predator stalking or fighting a prey. It is not difficult to extend the method to include these cases (e.g. see \cite{geritz2013group} where prey groups of different sizes are modelled as different prey states), but it becomes more difficult to prove the existence and, in particular, the uniqueness of an equilibrium of the fast dynamics of the state transitions.\\

In order to apply slow-fast time scale separation, it is necessary that the fast dynamics is settled on a unique and hyperbolically stable steady state. We are not able to give a general result. In particular, the uniqueness of the equilibrium corresponding to the fast dynamics remains an open question. However, by relaxing the conditions on the parameter values, we have built an example in which the fast time steady state is not unique: this may set a limit to the assumptions that we can make on the coefficients of the matrices which model the interactions in order to get a unique hyperbolically stable equilibrium. \\

In our approach only individuals in some specific discrete states are able to reproduce. The proportion of time that an individual spends in these states is equal to the proportion of individuals in such states at the fast time equilibrium. In this way birth is limited by a time-budget. Another (and possibly more realistic) approach would be to model births as energy limited, like in the dynamic energy budget models (see, for example, \cite{kooijman2010dynamic,geritz2014deangelis}).\\

A possible disadvantage of the approach is that to use it in a practical way, one must make assumptions about the transitions between microstates and the model may become parameter heavy. However, as a theoretical tool, the method has potential. One of the key questions in ecology today, raised by Durrett and Levin \cite{durrett1994importance} among others, is how to scale up from the level of individual behaviours in a population to functional responses and dynamics equations at the population level. Deriving functional and numerical responses from the behaviour of the individual prey and predator is important if one wants to go beyond a mere description of the population dynamics to an understanding in terms of the underlying individual level processes. Also the other way around, that is, if one wants to know the effect of certain changes in the behaviour of the individual prey or predator, the derivation of the population model from first principles in terms of individual behaviour is a necessity.

\appendix
\section{Appendix}\label{app1}
\begin{prop}\label{prop:ext}
Let $A,(B_{i,j}^k)_{i,j\in\{1,\dots,m\}}\in M_m(\mathbb R)$ and $(C_{i,j}^k)_{i,j\in\{1,\dots,n\}}, D\in M_n(\mathbb R)$ be matrices with non-negative off-diagonal coefficients. Suppose that $A$ and $D$ are transition matrices such that the linear system in (\ref{MATRSYSTLIN}) has a unique stable equilibrium and $(B_{i,j}^k)_{i,j\in\{1,\dots,m\}}, (C_{i,j}^k)_{i,j\in\{1,\dots,n\}}$ are irreducible matrices respectively for all $y>0, y \in \mathbb R^n$ and for all $ x>0, x \in \mathbb R^m$. Assume, moreover, that all these matrices are transition matrices and that the conservation laws on the total population density $\sum_{i=1}^{m} x_i=X$ and $\sum_{i=1}^{n} y_i=Y$ hold, with $X$ and $Y$ constant.  \\
Then, the system in (\ref{MATRSYST}) has at least one equilibrium point.
\end{prop}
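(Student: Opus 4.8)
The plan is to recast the existence of a fast-dynamics equilibrium as a fixed-point problem on the product of the two density simplices and to solve it by a Perron--Frobenius plus Schauder argument, exactly as announced before the statement. Introduce the compact convex sets $\Sigma_X=\{x\in\mathbb R^m: x_i\ge 0,\ \sum_{i=1}^m x_i=X\}$ and $\Sigma_Y=\{y\in\mathbb R^n: y_i\ge 0,\ \sum_{i=1}^n y_i=Y\}$. An equilibrium of (\ref{MATRSYST}) is precisely a pair $(\hat x,\hat y)\in\Sigma_X\times\Sigma_Y$ with $(A+B(\hat y))\hat x=0$ and $(C(\hat x)+D)\hat y=0$, so it suffices to produce such a pair.

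First I would fix $y\in\Sigma_Y$ and study $M(y):=A+B(y)$. By the consistency conditions (\ref{cons1})--(\ref{cons2}) this matrix has non-negative off-diagonal entries and columns summing to zero, so $\mathbf 1^\top M(y)=0$, whence $0$ is an eigenvalue of $M(y)$; a column-wise Gershgorin estimate places all eigenvalues in $\{\mathrm{Re}\,\lambda\le 0\}$ with $0$ on the boundary. Since $A$ is irreducible (this being the content of the assumption that (\ref{MATRSYSTLIN}) has a unique stable equilibrium) and $B(y)$ contributes only non-negative off-diagonal mass, $M(y)$ is irreducible for every $y\in\Sigma_Y$, boundary points included. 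Applying Perron--Frobenius to the non-negative irreducible matrix $M(y)+cI$ for large $c$ then shows that $0$ is a simple eigenvalue of $M(y)$ whose kernel is spanned by a strictly positive vector $v(y)$. Normalising, I set $\Phi(y)=X\,v(y)/\sum_i v_i(y)$, the unique element of $\ker M(y)\cap\Sigma_X$, and symmetrically, using irreducibility of $D$, I obtain $\Psi(x)\in\Sigma_Y$ as the unique positive normalised kernel vector of $C(x)+D$.

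Next I would establish continuity of $\Phi$ and $\Psi$. Because $0$ is a simple isolated eigenvalue of $M(y)$ uniformly over the compact set $\Sigma_Y$, the associated spectral projection depends continuously on the matrix entries, which in turn depend continuously (indeed linearly) on $y$; normalising the resulting one-dimensional kernel keeps the image in the relative interior of $\Sigma_X$, so $\Phi:\Sigma_Y\to\Sigma_X$ is continuous, and likewise $\Psi:\Sigma_X\to\Sigma_Y$. The map $F(x,y)=(\Phi(y),\Psi(x))$ is then a continuous self-map of the compact convex set $\Sigma_X\times\Sigma_Y$, and Schauder's fixed point theorem (Brouwer's already suffices, the setting being finite-dimensional) yields a fixed point $(\hat x,\hat y)$. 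By construction $\hat x=\Phi(\hat y)$ and $\hat y=\Psi(\hat x)$ give $(A+B(\hat y))\hat x=0$ and $(C(\hat x)+D)\hat y=0$, which is the desired equilibrium.

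The main obstacle is the uniform simplicity of the eigenvalue $0$ across the entire simplex, since this is what makes $\Phi$ and $\Psi$ single-valued and continuous, and it must survive on the boundary, where $B(y)$ or $C(x)$ may themselves fail to be irreducible. This is exactly where I would lean on the irreducibility of $A$ and $D$ rather than of $B$ and $C$: adding the non-negative off-diagonal contributions of $B(y)$ and $C(x)$ cannot destroy irreducibility, so $M(y)$ and $C(x)+D$ remain irreducible throughout $\Sigma_Y$ and $\Sigma_X$ and the Perron structure persists uniformly. Had one only irreducibility of $B$ and $C$ on the open positive orthant, the kernel could degenerate on the boundary and the fixed-point map would have to be handled as a set-valued correspondence through Kakutani's theorem; the stated hypotheses are precisely what let us avoid that and obtain at least one equilibrium directly.
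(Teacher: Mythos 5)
Your proposal is correct and takes essentially the same route as the paper: Perron--Frobenius applied to the non-negative off-diagonal matrices $A+B(y)$ and $C(x)+D$ yields a simple zero eigenvalue with a positive normalised kernel vector, and Schauder's (here really Brouwer's) fixed point theorem applied to the induced continuous self-map of the density simplex produces the equilibrium. The only divergences are cosmetic or in your favour: the paper composes the two kernel maps into a single self-map of the prey simplex and sources irreducibility of $A+B(y)$ from the assumed irreducibility of $B(y)$ for $y>0$ (leaving the boundary of the simplex, and the continuity of the eigenvector map, unaddressed), whereas you work on the product of the two simplices, justify continuity via the spectral projection, and secure irreducibility on the boundary from $A$ and $D$ --- just note that the stated hypothesis (uniqueness of the stable equilibrium of (\ref{MATRSYSTLIN})) literally gives only a one-dimensional kernel, i.e.\ a unique closed communicating class, which is slightly weaker than the irreducibility of $A$ and $D$ that your boundary argument invokes.
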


\begin{proof}
Consider the steady state equations 
\begin{equation}\label{SSEQ}
\left\{\begin{array}{l}
\bf (A+B(y))x=0 \\
\bf (C(x)+D)y=0
\end{array}\right.
\end{equation}
In the first set of equations, $\bf A+B(y)$ $\in M_m(\mathbb R)$ is an irreducible nonnegative off-diagonal matrix for all $\bf y>$ $0$, $\bf y$ $\in \mathbb R^n$ and by the Perron-Frobenius Theorem it has a simple dominant nonnegative eigenvalue, that is $0$. Let $\psi(\bf y)$ be the corresponding eigenvector satisfying 
$\sum_{i=1}^{m} \psi(y)_i=X$.\\
In the same way, for the second set of equations we have that $\bf C(x)+D$ $\in M_n(\mathbb R)$ is an irreducible nonnegative off-diagonal matrix for all $\bf x$ $>0$, $\bf x$ $\in \mathbb R^m$ and by the Perron-Frobenius Theorem it has a simple dominant nonnegative eigenvalue, that is $0$.
Let $\bf \phi(x)$ be the corresponding eigenvector satisfying $\sum_{i=1}^{n} \phi(x)_i=Y$.\\
The continuous map $\Phi$ from the compact convex set $\{\vec{x}\in \mathbb R^m : \sum_{i=1}^m x_i =X \}$ in itself, $\Phi=\psi\circ\phi$ has at least one fixed point by the Shauder's Fixed Point Theorem. This shows that the fast dynamics has at least one steady state.
\end{proof}

\section{Appendix}\label{newapp}
We give here the time scale separation for the system in Section \ref{sec2:3} in details. The dynamical system of the interactions modelled in Section \ref{sec2:3} is given by
\begin{equation}\label{dyn1}
\left\{\begin{array}{l}
\frac{dx_k}{d\tau}=\tilde{A}_{k,k-1}x_{k-1}-\tilde{A}_{kk}x_k+\tilde{A}_{k,k+1}x_{k+1}+ \lambda_k x_k-\mu_k x_k-\tilde{c}_k x_k \tilde{S}, \quad k=1,...,m\\
\frac{d\tilde{S}}{d\tau}=-\left( \sum_{k=1}^m \tilde{c}_k x_k \right)\tilde{S} +\tilde{d} \tilde{H} +\Gamma \tilde{H} -\delta \tilde{S}\\
\frac{d\tilde{H}}{d\tau}=+\left( \sum_{k=1}^m \tilde{c}_k x_k \right)\tilde{S}-\tilde{d} \tilde{H} -\delta \tilde{H}\\
\frac{dX}{dt}=g(X,Y) X - \left( \sum_{k=1}^m \tilde{c}_k x_k \right)\tilde{S}\\
\frac{d\tilde{Y}}{d\tau}=\Gamma \tilde{H} -\delta \tilde{Y}
\end{array}\right.
\end{equation}
where $\lambda_k$ and $\mu_k$ are respectively the \emph{per capita} birth and natural mortality rate for the prey in state $k$ as given in (\ref{nrpreydef}), $\Gamma$ is the conversion rate of prey into predators such that $\Gamma H=\gamma(X,Y)f(X,Y)Y$ and $\delta$ is the \emph{per capita} mortality rate of the predators.\\

Let $\varepsilon >0$ be a small and dimensionless scaling parameter. In order to separate the fast and slow dynamics, we define the following scalings for the parameters of fast time interactions and the predator population that is assumed to be much smaller than the prey population: $\tilde{A}_{k,k-1}=\varepsilon^{-1}{A}_{k,k-1}$, $\tilde{A}_{kk}=\varepsilon^{-1}{A}_{kk}$, $\tilde{A}_{k,k+1}=\varepsilon^{-1}{A}_{k,k+1}$, $\tilde{c}_k=\varepsilon^{-1}{c}_k$, $\tilde{d}=\varepsilon^{-1}d$, $\tilde{Y}=\varepsilon Y$, $\tilde{H}=\varepsilon H$, $\tilde{S}=\varepsilon S$. We now give the slow-fast equations corresponding to the system in (\ref{dyn1}) using the scaled parameters:
\begin{equation}\label{slowfast1}
\left\{\begin{array}{l}
\frac{dx_k}{d\tau}=\frac{A_{k,k-1}}{\varepsilon}x_{k-1}-\frac{A_{kk}}{\varepsilon}x_k+\frac{A_{k,k+1}}{\varepsilon}x_{k+1}+ \lambda_k x_k-\mu_k x_k-c_k x_k S, \quad k=1,...,m\\
\frac{dS}{d\tau}=-\left( \sum_{k=1}^m \frac{c_k}{\varepsilon} x_k \right)S+\frac{d}{\varepsilon} H +\Gamma H -\delta S\\
\frac{dH}{d\tau}=+\left( \sum_{k=1}^m \frac{c_k}{\varepsilon} x_k \right)S-\frac{d}{\varepsilon} H -\delta H\\
\frac{dX}{d\tau}=g(X,Y) X - \left( \sum_{k=1}^m \tilde{c}_k x_k \right)S\\
\frac{dY}{d\tau}=\Gamma H -\delta Y
\end{array}\right.
\end{equation}
We introduce the scaled short time $t=\varepsilon^{-1}\tau$ and let $\varepsilon \rightarrow 0$. We give the equations for the dynamics on the fast time scale:
\begin{equation}\label{fast1}
\left\{\begin{array}{l}
\frac{dx_k}{dt}=A_{k,k-1}x_{k-1}-A_{kk}x_k+A_{k,k+1}x_{k+1}, \quad k=1,...,m\\
\frac{dS}{dt}=-\left( \sum_{k=1}^m c_k x_k \right)S+dH \\
\frac{dH}{dt}=+\left( \sum_{k=1}^m c_k x_k \right)S-d H \\
\frac{dX}{dt}=0\\
\frac{dY}{dt}=0
\end{array}\right.
\end{equation}
The variables $X$ and $Y$ are constants on the fast time scale and from the equations in (\ref{fast1}) for the fast variables $x_k$, $S$ and $H$ we can now derive the fast dynamics equilibria given in (\ref{n1}), (\ref{n2}), (\ref{p1}) and (\ref{p2}).\\

The time scale separations for the models in Sections \ref{sec:3} and \ref{sec:4} follow the passages and the scalings given above for the system in Section \ref{sec2:3}. \\

We consider now the dynamical system for the interactions modelled in Section \ref{sec:5}:
\begin{equation}\label{dyn2}
\left\{\begin{array}{l}
\frac{dE}{dt}=-\tilde{a}E + \tilde{b}PS-\tilde{c}ES+\lambda P - \mu_1E\\
\frac{dP}{dt}=+\tilde{a}E- \tilde{b}PS -\mu_2 P\\
\frac{d\tilde{S}}{d\tau}=-\tilde{c}E\tilde{S} +\tilde{d} \tilde{H} +\Gamma \tilde{H} -\delta_1 \tilde{S}\\
\frac{d\tilde{H}}{d\tau}=+\tilde{c}E\tilde{S}-\tilde{d} \tilde{H} -\delta_2 \tilde{H}\\
\frac{dX}{dt}=g(X,Y) X - \tilde{c}E\tilde{S}\\
\frac{d\tilde{Y}}{d\tau}=\Gamma \tilde{H} -\delta(X,Y) \tilde{Y}
\end{array}\right.
\end{equation}
In this case, as in \cite{geritz2012mechanistic}, we additionally assume that $\tilde{b}$ is large in comparison to the other parameters. In this way, the term $-bPS$ in the short time scale equations for the exposed and the protected prey is not negligible, but part of the fast dynamics. We introduce the scaled parameter $\tilde{b}=\varepsilon^{-2}b$. For an alternative scaling, we could assume that the parameters $\tilde{c}$ and $\tilde{d}$ are small in comparison to the other parameters (see also \cite{geritz2012mechanistic}). In this case, we would not need to assume that the total predator size is much smaller than the total prey size, but we would separate the dynamics in (\ref{dyn2}) into three separate time scales.

\section{Appendix}\label{app2}
We compute the solutions for the system of equations in Section~\ref{sec5:1}, for different values of the parameters $a$, $b$, $c$, $d$ and different initial conditions. The numerical simulations in Fig.~\ref{fig:1} show that the uniqueness and hyperbolic stability of the fast dynamics steady state is verified for the chosen values of the parameters.
\begin{figure}[h!]
\centering
\begin{subfigure}[b]{0.326\textwidth}
  \includegraphics[width=\textwidth]{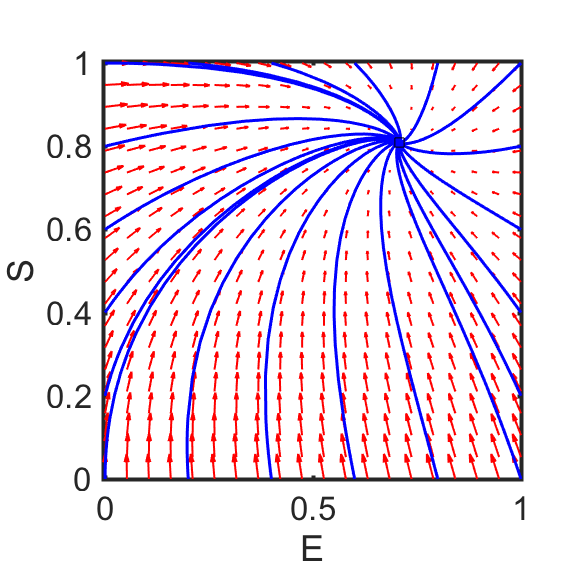}\hspace{0em}
  \caption{}
        \label{fig:a}
    \end{subfigure}
    \begin{subfigure}[b]{0.326\textwidth}
  \includegraphics[width=\textwidth]{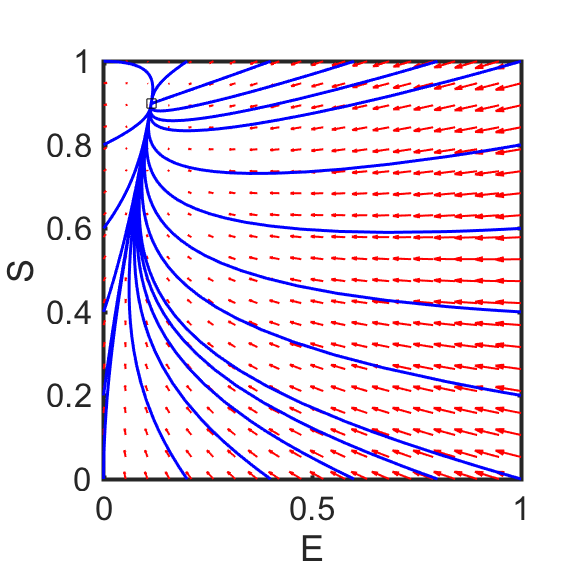}\hspace{0em}
  \caption{}
        \label{fig:b}
    \end{subfigure}
     \begin{subfigure}[b]{0.326\textwidth}
  \includegraphics[width=\textwidth]{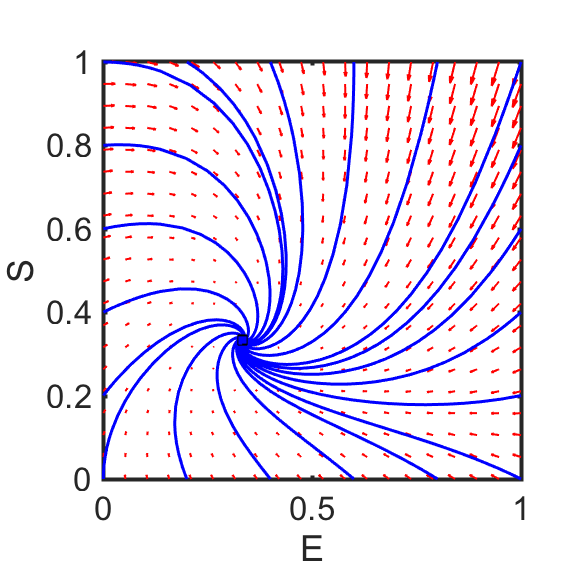}
  \caption{}
        \label{fig:c}
    \end{subfigure}
\caption{Different phase portraits for the system in Section~\ref{sec5:1}. The parameter values are chosen arbitrarily. (a): $a=0.1$, $b=0.3$, $c=0.1$, $d=0.3$; (b): $a=0.7$, $b=0.1$,$c=0.2$, $d=0.2$; (c): $a=0.2$, $b=0.3$, $c=0.6$, $d=0.1$. The total prey and predator densities are both set at $1$.}
\label{fig:1}      
\end{figure}

\section{Appendix}\label{app4}

\subsection{A \textit{counterexample} to the uniqueness of the fast dynamics steady state}

Consider the system in (\ref{SYST}). Suppose $\vec{A},(\vec{B}_{i,j}^k)_{i,j\in\{1,\dots,n\}},(\vec{C}_{i,j}^k)_{i,j\in\{1,\dots,n\}}, \vec{D} \in M_n(\mathbb R)$. \\
Consider weaker assumptions on $(\vec{B}_{i,j}^k)_{i,j\in\{1,\dots,n\}}$ and $(\vec{C}_{i,j}^k)_{i,j\in\{1,\dots,n\}}$, such that they are not non-negative off-diagonal matrices, but verify for every $i,j\in\{1,\dots,n\}$
\begin{equation}
\sum_{k=1}^nB_{i,j}^k=0,\quad \sum_{k=1}^nC_{i,j}^k=0.
\end{equation}
In this particular case, we are able to give numerically a counterexample to the uniqueness of the steady state of the system.\\
We consider the following symmetric system, where $\vec{A}\equiv \vec{D}$ and $(\vec{B}_{i,j}^k)_{i,j\in\{1,\dots,n\}}\equiv (\vec{C}_{i,j}^k)_{i,j\in\{1,\dots,n\}}$:
\begin{equation}
\left\{\begin{array}{l}
\frac {dx_1}{dt}(t)=\left(\frac 3{16}-1\right)x_1+\frac 3{16}x_2+y_1x_1\\ 
\frac {dx_2}{dt}(t)=-\left(\frac 3{16}-1\right)x_1-\frac 3{16}x_2-y_1x_1\\ 
\frac {dy_1}{dt}(t)=\left(\frac 3{16}-1\right)y_1+\frac 3{16}y_2+x_1y_1\\ 
\frac {dy_2}{dt}(t)=-\left(\frac 3{16}-1\right)y_1-\frac 3{16}y_2-x_1y_1 
\end{array}\right.
\end{equation}
This model satisfies the relations necessary for the conservation of mass: one can check that $\frac {d(x_1+x_2)}{dt}(t)=\frac {d(y_1+y_2)}{dt}(t)=0$ (for any $(x_1(t),x_2(t),y_1(t),y_2(t))$). Moreover, the system has two steady states, namely
\begin{equation}
(x_1,x_2,y_1,y_2)=\left(\frac 14,\frac 34,\frac 14,\frac 34\right)
\end{equation}
and
\begin{equation}
(x_1,x_2,y_1,y_2)=\left(\frac 34,\frac 14,\frac 34,\frac 14\right)
\end{equation}
One may check that, thanks to the Implicit Function Theorem, perturbations of the steady states still exist if we make all the coefficients non zero, but close to the coefficients chosen above.\\

Under the stronger assumptions of non-negative off-diagonal matrices $({\bf B}_{i,j}^k)_{i\in\{1,\dots,n\}}$ and $({\bf C}_{i,j}^k)_{i\in\{1,\dots,n\}}$, we are also able to build a counterexample to the uniqueness of the steady state corresponding to the fast dynamics in (\ref{SYST}). In particular, we still assume a symmetric situation where the matrices describing the dynamics for the first set and the second set of equations are equivalent. Moreover, we impose that the matrices corresponding to the linear part of the system of equations are transition matrices such that state $1$ and state $n$ are absorbing states, in a stochastic sense. \\

We construct a counterexample based on the following cross-diffusion system:
\begin{equation} \left\{\begin{array}{l}
\Delta(a(m)n)=0,\\
\Delta(a(n)m)=0,
\end{array}\right.
\end{equation}
with $n,m\in C^2([0,1))$ and Neumann boundary conditions. To show that this system may have several solutions, one may consider $a$ satisfying
\begin{equation} a(\lambda+\cos(\pi x))=\frac 1{\lambda+\cos(\pi(1-x))}.\end{equation}
Then $n=\lambda+\cos(\pi x)$, $m=\lambda+\cos(\pi(1-x))$ is a solution and the symmetry of the system makes it non unique. We have developed a discrete version of this idea and we give in the following section the corresponding numerical example.
We proceed in three steps:\\

\noindent\textbf{Step 1: Defining $\bf x,y,T^x,T^y$ such that $\bf T^yx=0$ and $\bf T^xy=0$.}\\

\noindent\emph{Definition of $\bf x$ and $\bf y$.} \\
We consider $\vec{x},\vec{y} \in \mathbb R^n$ defined by 
\begin{equation}\label{def:XY}
x_i=\lambda+\cos\left(\pi \frac {i-1}{n-1}\right),\quad y_i=\lambda+\cos\left(\pi \frac {n-i}{n-1}\right)
\end{equation}
for every $i\in\{1,\dots,n\}$.
Note that $\bf x$ is different from $\bf y$. \\

\noindent\emph{Definition of $\bf T^{y}$.}\\
Let $\vec{T}^{\vec{y}}\in M_n(\mathbb R)$ a tridiagonal matrix: for any $i\in\{1,\dots,n\}$,
\begin{equation}\label{def:TY}
T^{y}_{i-1,i}=\frac{t^{y}_i}2,\quad T^{y}_{i,i}=-t^{y}_i,\quad T^{y}_{i+1,i}=\frac{t^{y}_i}2,
\end{equation}
except for the two terms $T^{y}_{1,1}=-\frac{t^{y}_1}{2}$, $T^{y}_{2,1}=\frac{t^{Y}_1}{2}$, $T^{y}_{n,n}=-\frac{t^{y}_n}{2}$, $T^{y}_{n-1,n}=\frac{t^{y}_n}{2}$ and all the other coefficients are $0$. In these expressions, we have denoted
\begin{equation}\label{def:tYi}
t^{y}_i=:\frac{1}{\lambda+\cos\left(\pi \frac {i-1}{n-1}\right)}.
\end{equation}
Thanks to the definitions (\ref{def:TY}) and (\ref{def:XY}), for $i\in\{2,\dots,n-1\}$
\begin{equation} (T^{y}x)_i=\frac{t^{y}_{i-1}}2x_{i-1}-t^{y}_ix_i+\frac{t^{y}_{i+1}}2x_{i+1}=0,\end{equation}
while for $i=1$ (a similar computation can be made for $i=n$),
\begin{equation} (T^{y}x)_1=-\frac{t^{y}_1}{2}x_1+\frac{t^{y}_{2}}{2}x_{2}=0.\end{equation}
We have then shown that 
\begin{equation}\label{eq:TYX}
\bf T^{y}x=0.
\end{equation}

\noindent\emph{Definition of $\bf T^{x}$.}\\
Let $\vec{T}^{\vec{x}}\in M_n(\mathbb R)$ a tridiagonal matrix: for any $i\in\{1,\dots,n\}$,
\begin{equation}\label{defTX}
T^{x}_{i-1,i}=\frac{t^{x}_i}2,\quad T^{x}_{i,i}=-t^{x}_i,\quad T^{x}_{i+1,i}=\frac{t^{x}_i}2,
\end{equation}
except for the two terms $T^{x}_{1,1}=-\frac{t^{x}_1}{2}$, $T^{x}_{2,1}=\frac{t^{x}_1}{2}$, $T^{x}_{n,n}=-\frac{t^{x}_n}{2}$, $T^{x}_{n-1,n}=\frac{t^{x}_n}{2}$ and all the other coefficients are $0$. In these expressions, we have denoted
\begin{equation}\label{def:tXi}
t^{x}_i=:\frac{1}{\lambda+\cos\left(\pi \frac {n-i}{n-1}\right)}.
\end{equation}
Then,
\begin{equation}
\bf T^{x}y=0.
\end{equation}

\noindent\textbf{Step 2: Defining a linear interpolation between $\bf T^y$ and $\bf T^x$.}\\

We want to define $\alpha_i$, $\beta_i$ such that 
\begin{equation}\label{eq:interp}
\alpha_i +\beta_i y_{\mu(i)}=t^{y}_i,\quad \alpha_i +\beta_i x_{\mu(i)}=t^{x}_i,
\end{equation} 
where $\mu(i)=1$ if $i\leq \frac n 2$ and $\mu(i)=n$ otherwise. \\

\noindent\emph{Case where $i\leq \frac n2$.}\\

\begin{equation}
\alpha_i +\beta_i (\lambda-1)=t^{y}_i,\quad \alpha_i +\beta_i (\lambda+1)=t^{x}_i,
\end{equation} 
Taking the difference of those two equations leads to
\begin{eqnarray}
-2\beta_i&=&\frac{1}{\lambda+\cos\left(\pi \frac {i-1}{n-1}\right)}-\frac{1}{\lambda+\cos\left(\pi \frac {n-i}{n-1}\right)}\\ \nonumber
&=&\frac{\cos\left(\pi \frac {n-i}{n-1}\right)-\cos\left(\pi \frac {i-1}{n-1}\right)}{\left(\lambda+\cos\left(\pi \frac {i-1}n\right)\right)\left(\lambda+\cos\left(\pi \frac {n-i}{n-1}\right)\right)}
\end{eqnarray}
which shows that $\beta_i$ is positive:
\begin{equation}\label{eq:betai}
\beta_i=\frac{\cos\left(\pi \frac {i-1}{n-1}\right)-\cos\left(\pi \frac {n-i}{n-1}\right)}{2\left(\lambda+\cos\left(\pi \frac {i-1}n\right)\right)\left(\lambda+\cos\left(\pi \frac {n-i}{n-1}\right)\right)}>0.
\end{equation}
Next, we add up the two equations appearing in (\ref{eq:interp}). Thanks to (\ref{def:XY}), we have $x_i+y_i=2\lambda$ and thanks to the definition (\ref{def:tYi}), (\ref{def:tXi}) of $t^{y}_i$ and $t^{x}_i$,
\begin{eqnarray}
2\alpha_i+2\lambda \beta_i&=&t^{y}_i+t^{x}_i\\ \nonumber
&=&\frac{1}{\lambda+\cos\left(\pi \frac {i-1}{n-1}\right)}+\frac{1}{\lambda+\cos\left(\pi \frac {n-i}{n-1}\right)}\\ \nonumber
&=&\frac{2\lambda+\cos\left(\pi \frac {i-1}{n-1}\right)+\cos\left(\pi \frac {n-i}{n-1}\right)}{\left(\lambda+\cos\left(\pi \frac {i-1}{n-1}\right)\right)\left(\lambda+\cos\left(\pi \frac {n-i}{n-1}\right)\right)}\\ \nonumber
&=&\frac{2\lambda}{\left(\lambda+\cos\left(\pi \frac {i-1}{n-1}\right)\right)\left(\lambda+\cos\left(\pi \frac {n-i}{n-1}\right)\right)}.
\end{eqnarray}
The value of $\beta_i$ is given by (\ref{eq:betai}), then
\[
2\alpha_i=\frac{2\lambda}{\left(\lambda+\cos\left(\pi \frac {i-1}{n-1}\right)\right)\left(\lambda+\cos\left(\pi \frac {n-i}{n-1}\right)\right)}-\lambda\frac{\cos\left(\pi \frac {i-1}{n-1}\right)-\cos\left(\pi \frac {n-i}{n-1}\right)}{\left(\lambda+\cos\left(\pi \frac {i-1}n\right)\right)\left(\lambda+\cos\left(\pi \frac {n-i}{n-1}\right)\right)}.
\]
Therefore 
\begin{equation}\label{eq:alphai}
\alpha_i>0,
\end{equation}
for $i\leq \frac n 2$ and $i\neq 1$. \\

\noindent\emph{Case where $i> \frac n2$.}\\
\begin{equation}
\alpha_i +\beta_i (\lambda+1)=t^{y}_i,\quad \alpha_i +\beta_i (\lambda-1)=t^{x}_i,
\end{equation} 
Taking the difference of those two equations leads to
\begin{eqnarray}
2\beta_i&=&\frac{1}{\lambda+\cos\left(\pi \frac {i-1}{n-1}\right)}-\frac{1}{\lambda+\cos\left(\pi \frac {n-i}{n-1}\right)}\\ \nonumber
&=&\frac{\cos\left(\pi \frac {n-i}{n-1}\right)-\cos\left(\pi \frac {i-1}{n-1}\right)}{\left(\lambda+\cos\left(\pi \frac {i-1}n\right)\right)\left(\lambda+\cos\left(\pi \frac {n-i}{n-1}\right)\right)}
\end{eqnarray}
which shows that $\beta_i$ is positive:
\begin{equation}\label{eq:betai}
\beta_i=\frac{\cos\left(\pi \frac {n-i}{n-1}\right)-\cos\left(\pi \frac {i-1}{n-1}\right)}{2\left(\lambda+\cos\left(\pi \frac {i-1}n\right)\right)\left(\lambda+\cos\left(\pi \frac {n-i}{n-1}\right)\right)}>0.
\end{equation}
Next, we add up the two equations appearing in (\ref{eq:interp}). Thanks to (\ref{def:XY}), we have $x_i+y_i=2\lambda$ and from the definitions in (\ref{def:tYi}) and (\ref{def:tXi}) of $t^{y}_i$ and $t^{x}_i$ we obtain
\begin{eqnarray}
2\alpha_i+2\lambda \beta_i&=&t^{y}_i+t^{x}_i\\ \nonumber
&=&\frac{1}{\lambda+\cos\left(\pi \frac {i-1}{n-1}\right)}+\frac{1}{\lambda+\cos\left(\pi \frac {n-i}{n-1}\right)}\\ \nonumber
&=&\frac{2\lambda+\cos\left(\pi \frac {i-1}{n-1}\right)+\cos\left(\pi \frac {n-i}{n-1}\right)}{\left(\lambda+\cos\left(\pi \frac {i-1}{n-1}\right)\right)\left(\lambda+\cos\left(\pi \frac {n-i}{n-1}\right)\right)}\\ \nonumber
&=&\frac{2\lambda}{\left(\lambda+\cos\left(\pi \frac {i-1}{n-1}\right)\right)\left(\lambda+\cos\left(\pi \frac {n-i}{n-1}\right)\right)}.
\end{eqnarray}
The value of $\beta_i$ is given by (\ref{eq:betai}), then
\[
2\alpha_i=\frac{2\lambda}{\left(\lambda+\cos\left(\pi \frac {i-1}{n-1}\right)\right)\left(\lambda+\cos\left(\pi \frac {n-i}{n-1}\right)\right)}-\lambda\frac{\cos\left(\pi \frac {n-i}{n-1}\right)-\cos\left(\pi \frac {i-1}{n-1}\right)}{\left(\lambda+\cos\left(\pi \frac {i-1}n\right)\right)\left(\lambda+\cos\left(\pi \frac {n-i}{n-1}\right)\right)}. 
\]
Therefore 
\begin{equation}\label{eq:alphai}
\alpha_i>0,
\end{equation}
for $i>\frac n 2$ and $i\neq n$. \\

\noindent\textbf{Step 3: Conclusion.}\\

\noindent\emph{Definition of $\bf A,B,C,D$.}\\
Let $\vec{A},\vec{B},\vec{C},\vec{D}\in M_n(\mathbb R)$ tridiagonal matrices. We define the matrix $\bf A$ as follows, using the coefficients $\alpha_i$ defined by (\ref{eq:alphai}) for $i\in\{1,\dots,n\}$:
\begin{equation} A_{i-1,i}=\frac{\alpha_i}2,\quad A_{i,i}=-\alpha_i,\quad A_{i+1,i}=\frac{\alpha_i}2. \end{equation}
Moreover we denote $A_{n,1}=\frac{\alpha_1}2$, $A_{1,n}=\frac{\alpha_n}2$, while the other coefficients are $0$. Thanks to (\ref{eq:alphai}), $\bf A$ is then an off-diagonal non-negative matrix, and 
\begin{equation} \sum_{k=1}^nA_{k,i}=A_{k-1,k}+A_{k,k}+A_{k+1,k}=0.\end{equation}
We define next the matrix $\bf \tilde B$, using the coefficients $\beta_i$ defined by (\ref{eq:betai}) for $i\in\{1,\dots,n\}$:
\begin{equation} \tilde B_{i-1,i}=\frac{\beta_i}2,\quad \tilde B_{i,i}=-\beta_i,\quad \tilde B_{i+1,i}=\frac{\beta_i}2,\end{equation}
Moreover we obtain $\tilde B_{n,1}=\frac{\beta_1}2$, $\tilde B_{1,n}=\frac{\beta_n}2$, while the other coefficients are $0$. Thanks to (\ref{eq:betai}), $\bf \tilde B$ is an off-diagonal non-negative matrix. We define the family of matrices $(\vec{B}_{i,j}^k)_k$ by 
\begin{eqnarray} 
B_{i,j}^k&=&\tilde B_{k,i}\textrm{ if }j=\mu(i), \\ \nonumber
B_{i,j}^k&=&0, \textrm{ otherwise}.
\end{eqnarray}
For any $i\in\{1,\dots,n\}$, $(\vec{B}_{i,j}^k)_{k,j}$ is then an off-diagonal non-negative matrix, and it satisfies for $i\in\{1,\dots,n\}$
\begin{eqnarray} 
\sum_{k=1}^nB_{i,j}^k&=&\sum_{k=1}^n\tilde B_{k,i}=\tilde B_{i-1,i}+\tilde B_{i,i}+\tilde B_{i+1,i}=0\textrm{ if }\mu(i)=j,\\ \nonumber
\sum_{k=1}^nB_{i,j}^k&=&0,\textrm{ otherwise}.
\end{eqnarray}
Finally, we define
\begin{equation}\label{def:CD}
\bf C:=B,\quad D:=A.
\end{equation}
Note that the matrices $\vec{A},\,(\vec{B}^k)_k,\,(\vec{C}^k)_k,\,\vec{D}$ we have constructed satisfy the assumptions given at the beginning of this section.\\

\noindent\emph{Showing that $\bf (x,y)$ and $\bf (y,x)$ are two steady states.}\\
For all $k\in\{1,\dots,n\}$, we use the definition of $\bf A$ and $\bf B$ to obtain
\begin{eqnarray}
&&\sum_{i=1}^n A_{k,i}x_i+\sum_{i=1}^n \left(\sum_{j=1}^nB_{i,j}^ky_j\right)x_i=\\ \nonumber
&=&\sum_{i=1}^n A_{k,i}x_i+\sum_{i=1}^n \left(B_{k,i}y_{\mu(i)}\right)x_i=\sum_{i=k-1}^{k+1} \left(A_{k,i}+B_{k,i}y_{\mu(i)}\right)x_i\\ \nonumber
&=&\left(A_{k,k-1}+B_{k,k-1}y_{\mu(k-1)}\right)x_{k-1}+\left(A_{k,k}+B_{k,k}y_{\mu(k)}\right)x_{k}+\left(A_{k,k+1}+B_{k,k+1}y_{\mu(k+1)}\right)x_{k+1}\\ \nonumber
&=&\frac 12\left(\alpha_{k-1}+\beta_{k-1}y_{\mu(k-1)}\right)x_{k-1}-\left(\alpha_k-\beta_ky_{\mu(k)}\right)x_{k}+\frac 12\left(\alpha_{k+1}+\beta_{k+1}y_{\mu(k+1)}\right)x_{k+1}\\ \nonumber
 &=&\frac 12 t^y_{k-1}x_{k-1}-t^y_kx_k+\frac 12 t^y_{k+1}x_{k+1}= \left(T^yx\right)_k=0.
\end{eqnarray}
Similarly,
\begin{eqnarray}
&&\sum_{i=1}^n A_{k,i}y_i+\sum_{i=1}^n \left(\sum_{j=1}^n B_{i,j}^kx_j\right)y_i=\\ \nonumber
&=&\sum_{i=1}^n A_{k,i}y_i+\sum_{i=1}^n \left(B_{k,i}x_{\mu(i)}\right)y_i=\sum_{i=k-1}^{k+1} \left(A_{k,i}+B_{k,i}x_{\mu(i)}\right)y_i \\ \nonumber
&=&\left(A_{k,k-1}+B_{k,k-1}x_{\mu(k-1)}\right)y_{k-1}+\left(A_{k,k}+B_{k,k}x_{\mu(k)}\right)y_{k}+\left(A_{k,k+1}+B_{k,k+1}x_{\mu(k+1)}\right)y_{k+1}\\ \nonumber
& =&\frac 12\left(\alpha_{k-1}+\beta_{k-1}x_{\mu(k-1)}\right)y_{k-1}-\left(\alpha_k-\beta_kx_{\mu(k)}\right)y_{k}+\frac 12\left(\alpha_{k+1}+\beta_{k+1}x_{\mu(k+1)}\right)y_{k+1}\\ \nonumber
&=&\frac 12 t^x_{k-1}y_{k-1}-t^x_ky_k+\frac 12 t^x_{k+1}y_{k+1}= \left(T^xy\right)_k=0.
\end{eqnarray}
Thanks to the symmetry of the coefficients (see (\ref{def:CD})), we also obtain
\begin{equation}
\sum_{i=1}^n \left(\sum_{j=1}^nC_{i,j}^kx_j\right)y_i+\sum_{i=1}^n D_{k,i}y_i=\sum_{i=1}^n A_{k,i}y_i+\sum_{i=1}^n \left(\sum_{j=1}^n B_{i,j}^kx_j\right)y_i=0,
\end{equation}
\begin{equation}
\sum_{i=1}^n \left(\sum_{j=1}^nC_{i,j}^ky_j\right)x_i+\sum_{i=1}^n D_{k,i}x_i=\sum_{i=1}^n A_{k,i}x_i+\sum_{i=1}^n \left(\sum_{j=1}^nB_{i,j}^ky_j\right)x_i=0. 
\end{equation}

Finally, we have constructed two steady-states $\bf (x,y)$ and $\bf (y,x)$ for the system of differential equations that we consider:
\begin{equation} 
\left\{\begin{array}{l}
0=\sum_{i=1}^n A_{k,i}x_i+\sum_{i=1}^n \left(\sum_{j=1}^nB_{i,j}^ky_j\right)x_i,\\ \nonumber
0=\sum_{i=1}^n \left(\sum_{j=1}^nC_{i,j}^kx_j\right)y_i+\sum_{i=1}^n D_{k,i}y_i,
\end{array}\right.
\end{equation}
and
\begin{equation}
\left\{\begin{array}{l}
0=\sum_{i=1}^n A_{k,i}y_i+\sum_{i=1}^n \left(\sum_{j=1}^n B_{i,j}^kx_j\right)y_i,\\ \nonumber
0=\sum_{i=1}^n \left(\sum_{j=1}^nC_{i,j}^ky_j\right)x_i+\sum_{i=1}^n D_{k,i}x_i.
\end{array}\right.
\end{equation}

\subsection{Numerical example}
We construct a numerical example where we consider the prey and the predators structured into three states. The individual level reactions correspond to the following network and reaction rates
\begin{equation}\label{numexreact}
 \begin{tikzcd}[every arrow/.append style={shift left}]
\mathcircled{x_1} \arrow{r}{\frac{1}{6}y_1} & \mathcircled{x_2} \arrow{l}{\frac{1}{4+\sqrt{2}}}  \arrow{r}{\frac{1}{4+\sqrt{2}}} & \mathcircled{x_3}  \arrow{l}{\frac{1}{6}y_3} \\
  \mathcircled{y_1} \arrow{r}{\frac{1}{6}x_1} & \mathcircled{y_2} \arrow{l}{\frac{1}{4+\sqrt{2}}}  \arrow{r}{\frac{1}{4+\sqrt{2}}} & \mathcircled{y_3}  \arrow{l}{\frac{1}{6}x_3} \\
 \end{tikzcd}
\end{equation}
Note that $x_1$ and $x_3$, $y_1$ and $y_3$ are absorbing states for the transition matrices $\bf A$ and $\bf D$, respectively. A possible biological interpretation of the interactions in (\ref{numexreact}) is given by assuming the prey population in two different locations, in particular the prey individuals in state $x_1$ are in the first location, while the prey individuals in state $x_3$ are in the second location. The searching predators are also divided into those individuals in state $y_1$, which are searching for a prey in the first location, and those in state $y_3$, which are hunting in the second location. When a prey in state $x_1$ meets a predator in state $y_1$ (and similarly for a prey in state $x_3$ meeting a predator in state $y_3$), it goes hiding with rate $\frac{1}{6}$ and with rate $\frac{1}{4+\sqrt{2}}$ it goes back either to state $x_1$ or to state $x_3$. After an encounter with a prey individual in state $x_1$, with probability per unit of time $\frac{1}{6}$ the predator in state $y_1$ starts handling the prey. The same interactions occur for the predators in the second location when they meet a prey state $x_3$.\\

Define $x_i$ and $y_i$, with $i=1,2,3$ and $n=3$ as follows
\begin{eqnarray}
x_1=\lambda+\cos\left(\pi \frac {i-1}{n-1}\right)\big|_{i=1}=\lambda+1 \\ \nonumber
x_2=\lambda+\cos\left(\pi \frac {i-1}{n+1}\right)\big|_{i=2}=\lambda+\frac{\sqrt{2}}{2}\\ \nonumber
x_3=\lambda+\cos\left(\pi \frac {i-1}{n-1}\right)\big|_{i=3}=\lambda-1
\end{eqnarray}
\begin{eqnarray}
y_1=\lambda+\cos\left(\pi \frac {n-i}{n-1}\right)\big|_{i=1}=\lambda-1\\ \nonumber
y_2=\lambda+\cos\left(\pi \frac {i-1}{n+1}\right)\big|_{i=2}=\lambda+\frac{\sqrt{2}}{2}\\ \nonumber
y_3=\lambda+\cos\left(\pi \frac {n-i}{n-1}\right)\big|_{i=1}=\lambda+1
\end{eqnarray}
Note that $\bf x$ is different from $\bf y$.\\
Define the matrices $\bf T^y$ and $\bf T^x$ such that $\bf T^y x=0$ and $\bf T^x y=0$ as follows
\begin{eqnarray}
t_1^y=\frac{1}{x_1}=\frac{1}{\lambda+1}\\ \nonumber
t_2^y=\frac{1}{x_2}=\frac{1}{\lambda+\frac{\sqrt{2}}{2}}\\ \nonumber
t_3^y=\frac{1}{x_3}= \frac{1}{\lambda-1}
\end{eqnarray}
\begin{eqnarray}
t_1^x=\frac{1}{y_1}=\frac{1}{\lambda-1}\\ \nonumber
t_2^x=\frac{1}{y_2}=\frac{1}{\lambda+\frac{\sqrt{2}}{2}}\\ \nonumber
t_3^x=\frac{1}{y_3}= \frac{1}{\lambda+1}
\end{eqnarray}
Then
\begin{equation}
\vec{T}^{\vec{y}}=\left[\begin{array}{ccc}
-\frac{t_1^y}{2} & \frac{t_2^y}{2} & 0\\ 
\frac{t_1^y}{2} & -t_2^y & \frac{t_3^y}{2}\\ 
0 & \frac{t_2^y}{2} & - \frac{t_3^y}{2}
\end{array}\right]
\end{equation}
and 
\begin{equation} (T^{y}x)_i=\frac{t^{y}_{i-1}}2x_{i-1}-t^{y}_ix_i+\frac{t^{y}_{i+1}}2x_{i+1}=0.\end{equation}
We can define $\bf T^x$ in the same way and show that
\begin{equation}(T^{x}y)_i=\frac{t^{x}_{i-1}}2y_{i-1}-t^{x}_iy_i+\frac{t^{x}_{i+1}}2y_{i+1}=0.\end{equation}
Then we define a linear interpolation between $\bf T^y$ and $\bf T^x$. In particular, we define $\alpha_i$, $\beta_i$ such that 
\begin{equation}\label{eq:interp}
\alpha_i +\beta_i y_{\mu(i)}=t^{y}_i,\quad \alpha_i +\beta_i x_{\mu(i)}=t^{x}_i,
\end{equation} 
where $\mu(i)=1$ if $i\leq \frac n 2$ and $\mu(i)=n$ otherwise. \\
From now on, all the numerical values are obtained by taking $\lambda=2$. This choice of the parameter values does not have any particular biological justification.\\
Then, by solving the above system of equations for $\alpha_i$ and $\beta_i$, we get
\begin{eqnarray}
\alpha_1=0 \quad \beta_1=\frac{1}{3}\\ \nonumber
\alpha_2=\frac{2}{4+\sqrt{2}} \quad \beta_2=0 \\ \nonumber
\alpha_3=0 \quad \beta_3=\frac{1}{3}\\ \nonumber
\end{eqnarray}
Then, we can define the following system of ordinary differential equations 
\begin{equation}\label{numericsyst}
\left\{\begin{array}{l}
\frac {dx_1}{dt}(t)= \frac{\alpha_2}{2}x_2-\frac{\beta_1}{2}y_1 x_1=\frac{1}{4+\sqrt{2}}x_2 -\frac{1}{6}y_1 x_1\\ \\
\frac {dx_2}{dt}(t)=-\alpha_2 x_2 + \frac{\beta_1}{2}y_1 x_1+\frac{\beta_3}{2}y_3 x_3= -\frac{2}{4+\sqrt{2}}x_2+\frac{1}{6}y_1 x_1+\frac{1}{6}y_3 x_3\\ \\
\frac {dx_3}{dt}(t)=\frac{\alpha_2}{2}x_2-\frac{\beta_3}{2}y_3 x_3 = \frac{1}{4+\sqrt{2}}x_2 - \frac{1}{6}y_3 x_3 \\ \\
\frac {dy_1}{dt}(t)=\frac{\alpha_2}{2}y_2-\frac{\beta_1}{2}x_1 y_1=\frac{1}{4+\sqrt{2}}y_2 -\frac{1}{6}x_1 y_1\\ \\
\frac {dy_2}{dt}(t)=-\alpha_2 y_2 + \frac{\beta_1}{2}x_1 y_1+\frac{\beta_3}{2}x_3 y_3= -\frac{2}{4+\sqrt{2}}y_2+\frac{1}{6}x_1 y_1+\frac{1}{6}x_3 y_3\\ \\
\frac {dy_3}{dt}(t)=\frac{\alpha_2}{2}y_2-\frac{\beta_3}{2}x_3 y_3= \frac{1}{4+\sqrt{2}}y_2 - \frac{1}{6}x_3 y_3 
\end{array}\right.
\end{equation}\\

The system in (\ref{numericsyst}) has at least two steady-states. 
Thanks to the symmetry of the coefficients, $(x_1,x_2,x_3,y_1,y_2,y_3)=(3, 2+\frac{1}{\sqrt{2}},1,1, 2+\frac{1}{\sqrt{2}},3)$ and $(x_1,x_2,x_3,y_1,y_2,y_3)=(1, 2+\frac{1}{\sqrt{2}},3,3, 2+\frac{1}{\sqrt{2}},1)$ are both equilibrium points.
Furthermore, the positive solutions of the steady states equations are given by
\[
x_2=\frac{1}{6}(4x_1y_1+\sqrt{2}x_1 y_1), \quad y_2=\frac{1}{6}(4x_1y_1+\sqrt{2}x_1 y_1), \quad
y_3=\frac{x_1 y_1}{x_3}
\]
or
\[
y_2=\frac{1}{6}(4x_1y_1+\sqrt{2}x_1 y_1), \quad x_2=\frac{1}{6}(4x_1y_1+\sqrt{2}x_1 y_1), \quad
x_3=\frac{x_1 y_1}{y_3}
\]
\\

By imposing the conservation law on the sums of the $x_i$ and $y_i$, i.e. $\sum_{i=1}^3 x_i= X=6+\frac{1}{\sqrt{2}}$ and $\sum_{i=1}^3 y_i=Y=6+\frac{1}{\sqrt{2}}$, we get either
\[
x_2=\frac{(6.70711 -  x_1) x_1}{1.10819 + x_1}, \quad x_3=\frac{7.43278 - 1.10819 x_1}{1.10819 +x_1},
\]
\[
y_1=\frac{7.43278 - 1.10819 x_1}{1.10819 + x_1}, \quad y_2=\frac{(6.70711 - x_1) x_1}{1.10819 +x_1}, 
\]
\[
y_3=\frac{1.86329\dot 10^{-15} + 1.10819 x_1 +x_1^2}{1.10819 +x_1}
\]

or the following formulation
\[
x_1=\frac{7.43278 - 1.10819 y_1}{1.10819 + y_1}, \quad x_2=\frac{(6.70711 - y_1) y_1}{1.10819 +y_1}, \]
\[x_3=\frac{1.86329\dot 10^{-15} + 1.10819 y_1 +y_1^2}{1.10819 +y_1},
\]
\[
y_2=\frac{(6.70711 -  y_1) y_1}{1.10819 + y_1}, \quad y_3=\frac{7.43278 - 1.10819 y_1}{1.10819 +y_1} 
\] \\

The Jacobian matrix of the system (not evaluated yet at the equilibrium and not considering the conservation law on the $x_i$ and $y_i$) is given by
\begin{equation} \vec{J}(x_1,x_3,y_1,y_3)=
\left[\begin{array}{cccccc}
-\frac{y_1}{6} & \frac{1}{\sqrt{2}+4} & 0 & -\frac{x_1}{6} & 0 & 0 \\
\frac{y_1}{6} & -\frac{2}{\sqrt{2}+4}&\frac{y_3}{6} & \frac{x_1}{6} & 0 & \frac{x_3}{6} \\
0 & \frac{1}{\sqrt{2}+4} & -\frac{y_3}{6} & 0 & 0 & -\frac{x_3}{6} \\
-\frac{y_1}{6} & 0 & 0 & -\frac{x_1}{6} & \frac{1}{\sqrt{2}+4} & 0 \\
\frac{y_1}{6} & 0 & \frac{y_3}{6} & \frac{x_1}{6} & -\frac{2}{\sqrt{2}+4} & \frac{x_3}{6} \\
0 & 0 & -\frac{y_3}{6} & 0 & \frac{1}{\sqrt{2}+4} & -\frac{x_3}{6}
\end{array}\right]
\end{equation}

We get always three eigenvalues with negative real part $\lambda_1$, $\lambda_5$, $\lambda_6$ and three null eigenvalues $\lambda_2$, $\lambda_3$, $\lambda_4$.\\
The eigenvalues zero determine a space with dimension at most three in the six dimensional one. As for the other three eigenvalues, they take negative values for every positive value of $x_1$ and $y_1$ smaller then $X$ and $Y$. The dynamics of the system converges to the manifold generated by the eigenvectors corresponding to the zero value eigenvalues.\\
It is not clear what is the behaviour on the stable manifold, but one could check, by numerically simulating the solutions of the system above for many initial conditions, that each point in the space generated by the eigenvectors corresponding to the null eigenvalues is an equilibrium point of the system. Therefore, we could conclude that on this space we have infinite not isolated fixed points, which are stable, but not asymptotically stable.

\thanks{
{\bf Acknowledgements}\\
This research was funded by the Academy of Finland, Centre of Excellence in Analysis and Dynamics Research and supported by a public grant as part of the Investissement d'avenir project, reference ANR-11-LABX-0056-LMH, LabEx LMH. 
Part of this research was performed during the participation of three of the authors at the thematic research program "Mathematical Biology" organised by the Mittag-Leffler Institute.
We also thank the two anonymous reviewers for their constructive remarks. }

%
%
%

\end{document}